\ifpdf \DeclareGraphicsRule{*}{mps}{*}{} \fi
 \newcommand{\Rmnum}[1]{\expandafter\@slowromancap\romannumeral #1@}
\newtheorem{theorem}{Theorem}[section]
\newtheorem{proposition}[theorem]{Proposition}
\newtheorem{lemma}[theorem]{Lemma}
\newcommand{\R}{{\mathbb R}}
\newcommand{\C}{{\mathbb C}}
\newcommand{\be}{\begin{equation}}
\newcommand{\ee}{\end{equation}}
\newcommand{\bea}{\begin{eqnarray}}
\newcommand{\eea}{\end{eqnarray}}
\newcommand{\ba}{\begin{array}}
\newcommand{\ea}{\end{array}}
\newcommand{\ol}{\overline}
\newcommand{\ra}{\rightarrow}
\newcommand{\id}{\mathbb{I}}
\newcommand{\re}{\mathrm{Re}}
\newcommand{\im}{\mathrm{Im}}
\newcommand{\eps}{\varepsilon}
\newcommand{\lam}{\lambda}
\newcommand{\Lam}{\Lambda}
\newcommand{\Gam}{\Gamma}
\newcommand{\Dta}{\Delta}
\numberwithin{equation}{section}
\begin{document}
\title[Large n-limit for Random matrices with External Source with 3 eigenvalues]{Large n-limit for Random matrices with External Source with three distinct eigenvalues}

\author[J.Xu]{Jian Xu}
\address{College of Science,\\
University of Shanghai for Science and Technology,\\
Shanghai 200093\\
People's  Republic of China}
\email{jianxu@usst.edu.cn or jianxu02@gmail.com}

\author[E.Fan]{Engui Fan*}
\address{School of Mathematical Sciences, Institute of Mathematics and Key Laboratory of Mathematics for Nonlinear Science\\
Fudan University\\
Shanghai 200433\\
People's  Republic of China}
\email{correspondence author:faneg@fudan.edu.cn}

\author[Y. Chen]{Yang Chen}
\address{Department of Mathematics,
University of Macau, Macau, P. R. China}
\email{yayangchen@umac.mo}
\keywords{Riemann-Hilbert problem, Random matrices, Multiple
Orthogonal Polynomials, Large N limit}

\date{\today}

\begin{abstract}
In this paper, we analyze the large n-limit for random matrix with
external source with three distinct eigenvalues. And we
confine ourselves in the Hermite case and the three distinct
eigenvalues are $-a,0,a$. For the case $a^2>3$, we establish the universal behavior
of local eigenvalue correlations in the limit $n\rightarrow \infty$, which is known from unitarily
invariant random matrix models. Thus, local eigenvalue correlations are expressed in
terms of the sine kernel in the bulk and in terms of the Airy kernel at the edge of the
spectrum. The result can be obtained by analyzing $4\times 4$
Riemann-Hilbert problem via nonlinear steepest decent method.
\end{abstract}


\maketitle


\section{Introduction and Statement of Results}
We will consider the ensemble of $n\times n$ Hermitian matrices  $M$ with
the density function defined by
\begin{equation}\label{ensem}
\frac{1}{Z_n}e^{-n Tr(V(M)-AM)}dM,
\end{equation}
where $V(M)$ is a  matrix polynomial  with respect to  $M$,  The number $n$ is a large parameter in the ensemble, $A$ is a $n\times n$ matrix, and  $Z_n$  is the normalization constant,
\begin{equation}
Z_n=\int{e^{-Tr(V(M)-AM)}dM},
\end{equation}
where the integral is over all $n\times n$ Hermitian matrices. When
$A=0$, the above ensemble is the standard unitary-invariant ensemble
in the theory of random matrices \cite{m}. The case when $A\ne 0$ is
called a random matrix model with external source, and has been
studied in, for example,
\cite{bkcmp1,bkcmp2,bkcmp3,bh,bh1,bh2,bh3,pzj1,pzj2}. In a series of paper \cite{bkcmp1,bkcmp2,bkcmp3}, Bleher and Kuijlaars considered the large n limit of this model when external source $A$ has two distinct eigenvalues via nonlinear steepest descent method based on a $3\times 3$ Riemann-Hilbert problem.
\par
In this paper, we consider the case external source $A$ has three distinct eigenvalues, say $\{a_1,a_2,a_3\}$. It was shown that when $A$ has three distinct eigenvalues, the correlation kernel function $K_n(x,y)$ is linked to a $4\times 4$ Riemann-Hilbert problem: finding $Y:\C\backslash \R \rightarrow \C^{4\times 4}$ such that
\begin{enumerate}

\item $Y$ is analytic on $\C\backslash \R$,

\item for $x\in \R$,
\begin{equation}\label{ran-Yjump}
Y_+(x)=Y_-(x)\left(
\begin{array}{cccc}1&\omega_1(x)&\omega_2(x)&\omega_3(x)\\0&1&0&0\\0&0&1&0\\0&0&0&1\end{array}\right),
\end{equation}
where $Y_+(x)$ and $Y_-(x)$  denote  the limit of $Y(z)$ as $z\rightarrow x$ from upper and lower half plane,respectively.

\item as $z\rightarrow \infty$, we have
\begin{equation}\label{ran-Yasy}
Y(z)=\left(\id+O(\frac{1}{z})\right)diag{(z^n,z^{-n_1},z^{-n_2},z^{-n_3})},
\end{equation}
where $\id$ denotes the $4\times 4$ identity matrix.

\end{enumerate}

We consider correlation kernel
\be\label{ran-KnY}
K_n(x,y)=\frac{e^{-\frac{1}{2}(V(x)+V(y))}}{2\pi i (x-y)}(\ba{cccc}0&e^{na_1y}&e^{na_2y}&e^{na_3y}\ea)Y^{-1}(y)Y(x)\left(\ba{c}1\\0\\0\\0\ea\right),
\ee
and  assume that the external source $A$ is a fixed diagonal matrix with $n_1$ eigenvalues $a_1=a$, $n_2$ eigenvalues $a_2=0$ and $n_3$ eigenvalues $a_3=-a$. It is the aim of this paper to analyze the Riemann-Hilbert problem as $n\rightarrow \infty$, by using the method of nonlinear steepest descent of Deift and Zhou \cite{dz}. And we focus here on Gaussian case $V(x)=\frac{1}{2}x^2$.
The main results of this paper are
\begin{theorem}
Let $V(M)=\frac{1}{2}M^2$, $n_1=n_3$, $n_2=t$, $a^2>3$. Then
the limiting mean density of eigenvalues
\be
\rho(z)=\lim_{n\ra\infty}\frac{1}{n}K_n(x,x),
\ee
exists, and it is supported by three intervals, $[-z_3,-z_2], [-z_1,z_1]$ and $[z_2,z_3]$.
\end{theorem}
\begin{theorem}\label{thm2}
For every $x_0\in(-z_3,-z_2)\cup (-z_1,z_1)\cup (z_2,z_3)$ and $u,v\in \R$, we have
\be
\lim_{n\ra\infty}\frac{1}{n\rho(x_0)}\hat K_n\left(x_0+\frac{u}{n\rho(x_0)},x_0+\frac{v}{n\rho(x_0)}\right)=\frac{sin \pi(u-v)}{\pi(u-v)}.
\ee
\end{theorem}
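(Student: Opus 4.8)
The plan is to carry out the Deift-Zhou nonlinear steepest descent analysis \cite{dz} of the $4\times4$ Riemann-Hilbert problem \eqref{ran-Yjump}--\eqref{ran-Yasy} --- the three-eigenvalue analogue of the Bleher-Kuijlaars scheme \cite{bkcmp1,bkcmp2,bkcmp3} --- reducing it as $n\to\infty$ to an explicitly solvable model problem up to an $O(1/n)$ correction, and then to read off the scaling limit of the correlation kernel ($\hat K_n$ being the kernel \eqref{ran-KnY} up to conjugation factors that do not affect the correlation functions). Throughout we use Theorem 1.1 for the density $\rho$ and its support $\Sig=[-z_3,-z_2]\cup[-z_1,z_1]\cup[z_2,z_3]$. \emph{Step 1.} With the Gaussian model and eigenvalues $-a,0,a$ of relative multiplicities $c_j=n_j/n$ one associates the algebraic curve $z=\x+\frac{c_1}{\x-a}+\frac{c_2}{\x}+\frac{c_3}{\x+a}$, quartic in $\x$ and linear in $z$; the curve is therefore rational, the Riemann surface $\mathcal R$ being the $\x$-sphere, and its four sheets give branches $\x_1,\dots,\x_4$ of $\x(z)$ with $\x_1(z)\sim z$ and $\x_j(z)\to a_{j-1}$ for $j=2,3,4$ at infinity, matching \eqref{ran-Yasy}. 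The branch points are the critical values of $z(\x)$, i.e. the images of the roots of $c_1(\x-a)^{-2}+c_2\x^{-2}+c_3(\x+a)^{-2}=1$; in the regime $a^2>3$ these are the six real simple points $\pm z_1,\pm z_2,\pm z_3$ bounding the three disjoint bands. Putting $\lam_k(z)=\int^z\x_k(s)\,ds$, slit along $\Sig$ with the standard normalization, the first transformation $Y\mapsto T$ conjugates $Y$ by a left constant diagonal matrix and a right diagonal matrix built from $e^{\pm n\lam_k}$, $e^{\pm nV}$ and $e^{\pm na_jz}$, arranged so that $T(z)=\id+O(1/z)$ at infinity; it renders \eqref{ran-Yjump} equal to $\id$ up to exponentially small terms off $\Sig$, \emph{provided} the $\lam_k$ obey the sign inequalities $\re(\lam_j-\lam_k)<0$ (for the appropriate ordering of sheets) on the two gaps and on the two unbounded rays of $\R\setminus\Sig$.

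\emph{Step 2.} On each band the jump of $T$ factorizes in the usual lower/diagonal/upper triangular form; deforming the triangular factors onto lens contours around the bands gives $S$ with jumps equal to $\id$ plus exponentially small terms on the lens boundaries and piecewise constant on $\Sig$. The model problem with these constant jumps has a global parametrix $P^{(\infty)}$ which, $\mathcal R$ being rational, is an explicit matrix of algebraic functions of $\x$; the symmetry $z\mapsto-z$, i.e. $\x\mapsto-\x$ (available since $n_1=n_3$ and $a_1=-a_3$), halves the computation. At each of the six endpoints a local parametrix is built out of Airy functions in a fixed disc, matching $P^{(\infty)}$ on the boundary circle up to $O(1/n)$.

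\emph{Step 3.} Let $P$ equal $P^{(\infty)}$ outside the six discs and the Airy parametrix inside. Then $R:=S\,P^{-1}$ is analytic across $\Sig$ and solves a Riemann-Hilbert problem on a fixed contour with jump $\id+O(1/n)$ uniformly, whence $R=\id+O(1/n)$ uniformly on $\C$. Fix a bulk point $x_0$ in one of the three bands; there exactly one pair of the $\x_j$ is non-real, and unwinding $Y\mapsto T\mapsto S=R\,P$ in \eqref{ran-KnY} expresses $\frac{1}{n\rho(x_0)}\hat K_n(x_0+\frac{u}{n\rho(x_0)},x_0+\frac{v}{n\rho(x_0)})$ in terms of the relevant $2\times2$ block of $P^{(\infty)}$ at the two nearby points, multiplied by the oscillatory factors $e^{\pm i\pi n\int_{x_0}^{x}\rho}$ produced by $\re(\lam_j-\lam_k)=0$ on the band. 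Since $\rho$ is smooth and strictly positive at $x_0$ (vanishing only, and to order $\tfrac12$, at the endpoints, where the Airy parametrices take over), this $2\times2$ block has the universal ``free'' form, and the resulting computation --- identical to the classical unitarily invariant case and to \cite{bkcmp2,bkcmp3} --- collapses as $n\to\infty$ to $\frac{\sin\pi(u-v)}{\pi(u-v)}$, uniformly for $u,v$ in compact subsets of $\R$; this is the assertion of the theorem.

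\emph{Main obstacle.} The genuinely delicate step is Step 1: one must fix the correct global sheet structure of the quartic over the three bands and verify, throughout the regime $a^2>3$, the full family of sign inequalities $\re(\lam_j-\lam_k)<0$ on every gap and on both unbounded rays --- precisely these turn the triangular factors on the lens boundaries into exponentially small perturbations rather than $O(1)$ ones, and it is here that the hypothesis $a^2>3$ is used (it forces the three bands apart and makes the inequalities hold globally). Establishing it requires a careful study of the real structure of the curve $z=z(\x)$ and of the location and motion of its branch points.
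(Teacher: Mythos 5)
Your proposal is correct and follows essentially the same route as the paper: the same spectral curve (your rational parametrization $z=\x+\sum_j c_j/(\x-a_j)$ is exactly the paper's equation (3.2)), the same chain of transformations $Y\mapsto T\mapsto S\mapsto R$ with lens opening, a genus-zero global parametrix, Airy local parametrices, and $R=\id+O(1/n)$, followed by unwinding to get the sine kernel; you also correctly flag the verification of the inequalities $\re(\lam_j-\lam_k)<0$ off the bands (the paper's Lemma 3.1) as the place where $a^2>3$ enters. The only cosmetic difference is in the last step: rather than evaluating the relevant $2\times2$ block of the global parametrix explicitly, the paper observes that $S_+^{-1}(y)S_+(x)=\id+O(x-y)$ uniformly in $n$, so the sine arises directly from the oscillatory factors $e^{\pm ni\,\im\lam_{1+}}$ together with the mean value theorem and $\frac{d}{dx}\im\lam_{1+}=\pi\rho$.
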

\begin{theorem}\label{ran-thm3}
We use the same notation as in Theorem \ref{thm2}.
For every $u,v\in \R$, we have
\be
\lim_{n\ra\infty}\frac{1}{(\rho_1n)^{\frac{2}{3}}}\hat K_n\left(z_3+\frac{u}{(\rho_1n)^{\frac{2}{3}}},z_3+\frac{v}{(\rho_1n)^{\frac{2}{3}}}\right)=\frac{Ai(u)Ai'(v)-Ai'(u)Ai(v)}{u-v},
\ee
\be
\lim_{n\ra\infty}\frac{1}{(\rho_2n)^{\frac{2}{3}}}\hat K_n\left(z_2+\frac{u}{(\rho_2n)^{\frac{2}{3}}},z_2+\frac{v}{(\rho_2n)^{\frac{2}{3}}}\right)=\frac{Ai(u)Ai'(v)-Ai'(u)Ai(v)}{u-v},
\ee
\be
\lim_{n\ra\infty}\frac{1}{(\rho_3n)^{\frac{2}{3}}}\hat K_n\left(z_1+\frac{u}{(\rho_3n)^{\frac{2}{3}}},z_1+\frac{v}{(\rho_3n)^{\frac{2}{3}}}\right)=\frac{Ai(u)Ai'(v)-Ai'(u)Ai(v)}{u-v},
\ee
Similar limits hold near the edge points $-z_3,-z_2,-z_1$.
\end{theorem}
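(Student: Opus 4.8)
\emph{Proof strategy.} The plan is to run the Deift--Zhou steepest-descent analysis exactly as in the proof of Theorem \ref{thm2}: the same $g$-function transformation $Y\ra T$, the same opening of lenses $T\ra S$, the same global parametrix $N$ on the three bands $[-z_3,-z_2]$, $[-z_1,z_1]$, $[z_2,z_3]$, and the same local parametrices $P$ on fixed disks $D(z_j,\dta)$ around the six endpoints, leading to $R=SP^{-1}$ inside the disks, $R=SN^{-1}$ outside, and $R=\id+O(n^{-1})$ uniformly. Since all six endpoints of the bands are soft edges when $a^2>3$ — the limiting density vanishes there like a square root, $\rho(x)\sim\rho_1(z_3-x)^{1/2}$ near $z_3$, and similarly at the others — the local parametrix at each is the classical Airy parametrix. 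What remains for the present theorem is to unravel the transformation chain at an edge point and read off the kernel; I describe this at $z_3$, the points $z_2,z_1$ being identical and the three endpoints with negative abscissa following from the reflection symmetry $x\mapsto-x$ inherited from $n_1=n_3$.

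Near $z_3$ the new analytic input is that the relevant $\phi$-function vanishes like $(z_3-z)^{3/2}$; this is equivalent to the square-root vanishing of $\rho$ and, in the usual way, produces a conformal map $\zeta=\zeta(z)$ on $D(z_3,\dta)$ with $\zeta(z)\sim c\,n^{2/3}(z-z_3)$, $c>0$, negative inside the band and positive outside. Only two of the four sheets of the associated Riemann surface collide at $z_3$, while the complementary part of $S$ stays bounded and analytic across the disk, so the local problem decouples as a scalar factor times the standard $2\times2$ Airy model; one builds $P$ from Airy functions of $\zeta(z)$ together with an analytic prefactor $E_n(z)$ chosen so that $P$ has exactly the jumps of $S$ inside $D(z_3,\dta)$ and $PN^{-1}=\id+O(n^{-1})$ uniformly on $\pt D(z_3,\dta)$; this is what feeds the small-norm estimate $R=\id+O(n^{-1})$.

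To extract the kernel I substitute the inverse chain $Y=(\text{diagonal and triangular factors})\,R\,P$, valid for $x,y\in D(z_3,\dta)$, into (\ref{ran-KnY}). The row vector $(0,e^{na_1y},e^{na_2y},e^{na_3y})$, the factor $e^{-\frac12(V(x)+V(y))}$, and the diagonal conjugations coming from the three-term $g$-functions cancel all exponential growth — this cancellation is precisely what the modified kernel $\hat K_n$ records — after which $R(y)^{-1}R(x)=\id+O(n^{-1})$ and the $E_n$-prefactors in $P(y)^{-1}P(x)$ cancel, leaving the standard bilinear combination of the entries of the $2\times2$ Airy parametrix at $\zeta(x)$ and $\zeta(y)$, i.e.\ the Airy kernel in the $\zeta$-variable. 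With $x=z_3+u/(\rho_1 n)^{2/3}$ and $y=z_3+v/(\rho_1 n)^{2/3}$ one has $\zeta(x)\ra u$, $\zeta(y)\ra v$ — this equality, rather than a mere proportionality, being the defining property of the constant $\rho_1$ — and the classical Airy-parametrix identity gives the limit $\big(Ai(u)Ai'(v)-Ai'(u)Ai(v)\big)/(u-v)$. The same computation at $z_2$ and $z_1$ yields the other two formulas with the edge constants $\rho_2,\rho_3$, and reflection in the origin gives the limits near $-z_3,-z_2,-z_1$.

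The hard part will be the $4\times4$ bookkeeping: verifying that near each endpoint exactly the right pair of sheets interacts, so the local problem really does reduce to a scalar times the standard $2\times2$ Airy parametrix, and then tracking every scalar exponential and every $g$-function-induced diagonal factor through the substitution into (\ref{ran-KnY}) so that all of them cancel and a clean Airy kernel with the correct constant $\rho_i$ survives. The surrounding estimates — uniform $\id+O(n^{-1})$ control of the $R$-jump on a fixed contour, the matching on each $\pt D(z_j,\dta)$, and the exponential decay of the lens-boundary jumps away from the endpoints — are routine and parallel the two-eigenvalue analysis of Bleher--Kuijlaars \cite{bkcmp1,bkcmp2,bkcmp3} and the standard Gaussian soft-edge analysis obtained by the Deift--Zhou method \cite{dz}.
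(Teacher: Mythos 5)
Your proposal is correct and follows essentially the same route as the paper: the same chain $Y\to T\to S\to R$ with an Airy local parametrix at each soft edge, the conformal map $\beta$ normalized so that $n^{2/3}\beta(z_3+u/(\rho_1 n)^{2/3})\to u$, and the limit read off from $R^{-1}(y)R(x)=\id+O((x-y)/n)$ together with the cancellation of the $E_n$-prefactors. The ``hard part'' you defer is exactly what the paper executes via the transformation $P\to\tilde P$ (which strips the third and fourth columns from the jump so the local problem becomes $2\times 2$) and the explicit estimates $E_n=O(n^{1/6})$, $E_n^{-1}(y)E_n(x)=\id+O(n^{-1/3})$.
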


\section{Differential Equations}
In \cite{agk}, it was shown that the Riemann-Hilbert problem of $Y$ has a unique solution
\begin{equation}
Y=\left(
\begin{array}{cccc}
P_{n_1,n_2,n_3}&C(P_{n_1,n_2,n_3}\omega_1)&C(P_{n_1,n_2,n_3}\omega_2)&C(P_{n_1,n_2,n_3}\omega_3)\\
c_1P_{n_1-1,n_2,n_3}&c_1C(P_{n_1-1,n_2,n_3}\omega_1)&c_1C(P_{n_1-1,n_2,n_3}\omega_2)&c_1C(P_{n_1-1,n_2,n_3}\omega_3)\\
c_2P_{n_1,n_2-1,n_3}&c_2C(P_{n_1,n_2-1,n_3}\omega_1)&c_2C(P_{n_1,n_2-1,n_3}\omega_2)&c_2C(P_{n_1,n_2-1,n_3}\omega_3)\\
c_3P_{n_1,n_2,n_3-1}&c_3C(P_{n_1,n_2,n_3-1}\omega_1)&c_3C(P_{n_1,n_2,n_3-1}\omega_2)&c_3C(P_{n_1,n_2,n_3-1}\omega_3)
\end{array}
\right)
\end{equation}
where $P_{n_1,n_2,n_3}(x)=x^n+\cdots$ is a monic orthogonal  polynomial of degree $n=n_1+n_2+n_3$  with constants
\begin{equation}\label{cicoeff}
c_1=-2\pi i(h^{(1)}_{n_1-1,n_2,n_3})^{-1},\quad c_2=-2\pi
i(h^{(2)}_{n_1,n_2-1,n_3})^{-1},\quad c_3=-2\pi
i(h^{(3)}_{n_1,n_2,n_3-1})^{-1}
\end{equation}
and where $h^{(j)}_{k_1,k_2,k_3}=\int_{-\infty}^{\infty}P_{k_1,k_2,k_3}x^{k_j}\omega_j(x)dx,j=1,2,3$ and $\omega_j(x)=e^{-(V(x)-a_j x)},\quad j=1,2,3,$ and $Cf$ denotes the Cauchy transform of $f$, that is,
\begin{equation}
Cf(z)=\frac{1}{2\pi i}\int_{\R}\frac{f(s)}{s-z}ds.
\end{equation}
Introducing
\begin{equation}\label{Psidef}
\Psi_{n_1,n_2,n_3}=\left(
\begin{array}{cccc}
1&0&0&0\\
0&\frac{1}{c_1}&0&0\\
0&0&\frac{1}{c_2}&0\\
0&0&0&\frac{1}{c_3}
\end{array}
\right)Y_{n_1,n_2,n_3}\left(\begin{array}{cccc}e^{-V(x)}&0&0&0\\0&e^{-a_1x}&0&0\\0&0&e^{-a_2x}&0\\0&0&0&e^{-a_3x}\end{array}\right).
\end{equation}
we can obtain
\begin{enumerate}
\item $\Psi_{n_1,n_2,n_3}$ is analytic on $\C\backslash\R$,

\item for $x\in \R$, we have
\begin{equation}\label{Psijump}
\Psi_{n_1,n_2,n_3,+}(x)=\Psi_{n_1,n_2,n_3,-}(x)\left(\begin{array}{cccc}1&1&1&1\\0&1&0&0\\0&0&1&0\\0&0&0&1\end{array}\right),
\end{equation}

\item as $z\rightarrow \infty$, we also have
\begin{equation}\label{Psiasy}
\Psi(z)=\left(\id+\frac{\Psi^{(1)}_{n_1,n_2,n_3}}{z}+O(\frac{1}{z^2})\right)\left(\begin{array}{cccc}z^ne^{-V(x)}&0&0&0\\0&\frac{1}{c_1}z^{-n_1}e^{-a_1x}&0&0\\0&0&\frac{1}{c_2}z^{-n_2}e^{-a_2x}&0\\0&0&0&\frac{1}{c_3}z^{-n_3}e^{-a_3x}\end{array}\right)
\end{equation}
where
\begin{equation}\label{Psi1}
\Psi^{(1)}_{n_1,n_2,n_3}=\left(\begin{array}{cccc}p_{n_1,n_2,n_3}&\frac{h^{(1)}_{n_1,n_2,n_3}}{h^{(1)}_{n_1-1,n_2,n_3}}&\frac{h^{(2)}_{n_1,n_2,n_3}}{h^{(2)}_{n_1,n_2-1,n_3}}&\frac{h^{(3)}_{n_1,n_2,n_3}}{h^{(3)}_{n_1,n_2,n_3-1}}\\
1&\frac{q^{(1)}_{n_1-1,n_2,n_3}}{h^{(1)}_{n_1-1,n_2,n_3}}&\frac{h^{(2)}_{n_1-1,n_2,n_3}}{h^{(2)}_{n_1,n_2-1,n_3}}&\frac{h^{(3)}_{n_1-1,n_2,n_3}}{h^{(3)}_{n_1,n_2,n_3-1}}\\
1&\frac{h^{(1)}_{n_1,n_2-1,n_3}}{h^{(1)}_{n_1-1,n_2,n_3}}&\frac{q^{(2)}_{n_1,n_2-1,n_3}}{h^{(2)}_{n_1,n_2-1,n_3}}&\frac{h^{(3)}_{n_1,n_2-1,n_3}}{h^{(3)}_{n_1,n_2,n_3-1}}\\
1&\frac{h^{(1)}_{n_1,n_2,n_3-1}}{h^{(1)}_{n_1-1,n_2,n_3}}&\frac{h^{(2)}_{n_1,n_2,n_3-1}}{h^{(2)}_{n_1,n_2-1,n_3}}&\frac{q^{(3)}_{n_1,n_2,n_3-1}}{h^{(3)}_{n_1,n_2,n_3-1}}
\end{array}\right)
\end{equation}
here
\begin{equation}\label{qkdef}
q^{(k)}_{n_1,n_2,n_3}=\int_{-\infty}^{\infty}P_{n_1,n_2,n_3}(x)x^{n_k+1}\omega_k(x)dx,\quad k=1,2,3.
\end{equation}
and $P_{n_1,n_2,n_3}(z)=z^n+p_{n_1,n_2,n_3}z^{n-1}+\cdots$ is a multiple polynomials.
\end{enumerate}

According to the above notations, we have the following property.
\begin{proposition}
We have the differential equation,

\be\label{diffeqn}
 \Psi'_{n_1,n_2,n_3}(z)=N\left(
                          \ba{cccc}
                           -z&\frac{n_1}{N}&\frac{n_2}{N}&\frac{n_3}{N}\\
                           -1&-a&0&0\\
                           -1&0&0&0\\
                           -1&0&0&a
                          \ea
                          \right)\Psi_{n_1,n_2,n_3}(z).
\ee
where the prime $'$ denotes the derivative with respect to $z$.
\end{proposition}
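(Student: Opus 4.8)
The plan is to derive the differential equation from the analytic characterization of $\Psi_{n_1,n_2,n_3}$ given in properties (i)--(iii) above, by the standard argument that the logarithmic derivative of a solution to a Riemann--Hilbert problem with \emph{constant} jump matrix is itself entire. Concretely, set $R(z) = \Psi'_{n_1,n_2,n_3}(z)\,\Psi_{n_1,n_2,n_3}(z)^{-1}$. Since $\det\Psi_{n_1,n_2,n_3}$ is a nonzero constant (the jump matrix in \eqref{Psijump} has determinant $1$ and the leading asymptotics in \eqref{Psiasy} fix the constant), $\Psi_{n_1,n_2,n_3}(z)^{-1}$ is analytic wherever $\Psi_{n_1,n_2,n_3}$ is, so $R$ is analytic on $\C\setminus\R$. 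On the real line, differentiating the jump relation $\Psi_+ = \Psi_- J$ with $J$ constant gives $\Psi'_+ = \Psi'_- J$, hence $R_+ = \Psi'_+\Psi_+^{-1} = \Psi'_- J J^{-1}\Psi_-^{-1} = R_-$; therefore $R$ extends to an entire function.

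Next I would determine $R$ by examining its behavior at infinity. From \eqref{Psiasy}, write $\Psi_{n_1,n_2,n_3}(z) = \bigl(\id + \Psi^{(1)}/z + O(z^{-2})\bigr) D(z) E(z)$ where $D(z) = \mathrm{diag}(z^n, c_1^{-1}z^{-n_1}, c_2^{-1}z^{-n_2}, c_3^{-1}z^{-n_3})$ collects the powers of $z$ and $E(z) = \mathrm{diag}(e^{-V(z)}, e^{-a_1 z}, e^{-a_2 z}, e^{-a_3 z})$. Then
\begin{equation}
\Psi'_{n_1,n_2,n_3}\Psi_{n_1,n_2,n_3}^{-1} = \bigl(\id + \tfrac{\Psi^{(1)}}{z} + \cdots\bigr)\Bigl(\tfrac{D'}{D} + \tfrac{E'}{E}\Bigr)\bigl(\id + \tfrac{\Psi^{(1)}}{z} + \cdots\bigr)^{-1} + \bigl(\text{derivative of the }\id+\Psi^{(1)}/z\text{ factor}\bigr).
\end{equation}
Here $D'(z)D(z)^{-1} = \mathrm{diag}(n,-n_1,-n_2,-n_3)/z$ is $O(1/z)$, the $\Psi^{(1)}$-factor contributes only $O(1/z^2)$ after differentiation, and $E'(z)E(z)^{-1} = \mathrm{diag}(-V'(z),-a_1,-a_2,-a_3)$. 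With $V(z)=\tfrac12 z^2$ this diagonal term is $\mathrm{diag}(-z,-a_1,-a_2,-a_3)$, which is linear in $z$. So $R(z)$ is an entire function that grows at most linearly, hence $R(z) = Nz B_1 + B_0$ for constant matrices; matching the $O(z)$ and $O(1)$ terms of the expansion above pins down every entry. The $O(z)$ term forces $NB_1 = \mathrm{diag}(-1,0,0,0)$; the diagonal of the $O(1)$ term gives the entries $-a_1,-a_2,-a_3$ in positions $(2,2),(3,3),(4,4)$ (which, with $a_1=a,a_2=0,a_3=-a$, matches the claimed matrix after factoring out $N$); and the off-diagonal $O(1)$ entries are read off from the commutator $[\,\mathrm{diag}(-z,\dots),\Psi^{(1)}\,]$ together with $D'D^{-1}$ — this is where the entries $\tfrac{n_k}{N}$ in the first row and the $-1$'s in the first column appear, using the explicit form of $\Psi^{(1)}$ in \eqref{Psi1} and, crucially, the recurrence-type identities among the $h^{(k)}$ and $q^{(k)}$.

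The main obstacle is the last bookkeeping step: verifying that the off-diagonal constant entries coming out of the commutator $[E'E^{-1},\Psi^{(1)}] + D'D^{-1}$ really collapse to exactly $N\cdot(\text{first row }=(n_1/N,n_2/N,n_3/N),\ \text{first column of }-1\text{'s})$ and zeros elsewhere. This requires showing that the relevant ratios $h^{(k)}_{n_1,n_2,n_3}/h^{(k)}_{\dots-1\dots}$ and the off-diagonal blocks of $\Psi^{(1)}$ conspire to cancel, which follows from differentiating the integral definitions of $h^{(k)}_{k_1,k_2,k_3}$ and $q^{(k)}_{n_1,n_2,n_3}$ in \eqref{qkdef} and integrating by parts (using $\omega_k'(x) = (-V'(x)+a_k)\omega_k(x) = (a_k - x)\omega_k(x)$), exactly as in the two-eigenvalue case treated by Bleher--Kuijlaars. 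I would carry this out entry by entry, but since the structure is dictated by the asymptotics, the only real content is checking these orthogonality identities; the rest is forced. Finally, one should note the factor $N$: writing $n_j/N$ with $N=n$ (or whatever normalization is in force) is just the rescaling that makes the $(1,j)$ entries $O(1)$, and it is consistent because $n=n_1+n_2+n_3$.
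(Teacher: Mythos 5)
Your first half — setting $R=\Psi'\Psi^{-1}$, using the constancy of the jump in \eqref{Psijump} to conclude $R$ is entire, and Liouville plus the expansion \eqref{Psiasy} to conclude $R$ is a linear matrix polynomial with the stated diagonal and block structure — is exactly the paper's argument. You diverge in how the three remaining first-row constants are pinned down. The paper leaves them as unknowns $c_{n_1,n_2,n_3},d_{n_1,n_2,n_3},e_{n_1,n_2,n_3}$ and determines them from the compatibility condition $U'+UNA=NA_{+1}U$ between the differential equation and the recurrence $\Psi_{n_1+1,n_2,n_3}=U_{n_1,n_2,n_3}\Psi_{n_1,n_2,n_3}$ (quoted from [jx]), which yields the difference equations $c_{n_1+1,n_2,n_3}=c_{n_1,n_2,n_3}+\tfrac1N$ etc.\ and hence $c=\tfrac{n_1}{N}$ from the initial value $c_{0,n_2,n_3}=0$. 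You instead read the constants off the $O(1)$ term of the expansion: the only contribution there is $-[\Psi^{(1)},e_{11}]$ coming from the $-z$ entry of $E'E^{-1}$ (note $D'D^{-1}$ is $O(1/z)$ and does not enter at this order, contrary to what you suggest), so the first column of $-1$'s falls out immediately from $\Psi^{(1)}_{i1}=1$ in \eqref{Psi1}, and the first-row entries are the ratios $h^{(k)}_{n_1,n_2,n_3}/h^{(k)}_{\dots-1\dots}$, which must then be shown to equal $n_k$ (up to the paper's loose $N$-normalization). That identity is the one genuinely nontrivial input of your route, and your sketch of it is too optimistic: the naive integration by parts using $\omega_k'=(a_k-x)\omega_k$ leaves you with $\int P'_{\vec n}\,x^{n_k-1}\omega_k\,dx$, which orthogonality alone does not evaluate; you additionally need the derivative (Appell-type) formula $P'_{\vec n}=\sum_j n_j P_{\vec n-e_j}$ for multiple Hermite polynomials (cf.\ the Aptekarev--Branquinho--Van Assche reference), after which the unwanted terms vanish by orthogonality and $h^{(k)}_{\vec n}=n_k\,h^{(k)}_{\vec n-e_k}$ follows. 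With that lemma supplied, your argument closes and is self-contained in $z$ (no recurrence in the indices needed), whereas the paper's argument outsources the hard combinatorics to the recurrence relation of [jx]; the two inputs are essentially equivalent in content.
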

\begin{proof}
Let
\be
A_{n_1,n_2,n_3}=\frac{1}{N}\Psi'_{n_1,n_2,n_3}\Psi_{n_1,n_2,n_3}
\ee
From (\ref{Psijump}), we know $A_{n_1,n_2,n_3}$ has no jump, that is to say $A_{n_1,n_2,n_3}$ is analytic on the whole complex plane. From (\ref{Psiasy}), we can show as $z\ra \infty$
\be
A_{n_1,n_2,n_3}=\left(\id+\frac{\Psi^{(1)}_{n_1,n_2,n_3}}+\dots\right)\left(\ba{cccc}-z&0&0&0\\0&-a_1&0&0\\0&0&-a_2&0\\0&0&0&-a_4\ea\right)\left(\id+\frac{\Psi^{(1)}_{n_1,n_2,n_3}}+\dots\right)^{-1}+O(\frac{1}{z}).
\ee
As $A_{n_1,n_2,n_3}$ is entire function, it implies that
\be
A_{n_1,n_2,n_3}=\left(\ba{cccc}-z&c_{n_1,n_2,n_3}&d_{n_1,n_2,n_3}&e_{n_1,n_2,n_3}\\-1&-a_1&0&0\\-1&0&-a_2&0\\-1&0&0&-a_3\ea\right),
\ee
where $c_{n_1,n_2,n_3},d_{n_1,n_2,n_3},e_{n_1,n_2,n_3}$ are needed to determine.

\par
From the differential equation we get
\be
\Psi'_{n_1,n_2,n_3}(z)=NA_{n_1,n_2,n_3}\Psi_{n_1,n_2,n_3}(z).
\ee
Let us show how to determine $c_{n_1,n_2,n_3},d_{n_1,n_2,n_3},e_{n_1,n_2,n_3}$. From the recursion formula of $\Psi_{n_1,n_2,n_3}(z)$ \cite{jx},
say $\Psi_{n_1+1,n_2,n_3}=U_{n_1,n_2,n_3}\Psi_{n_1,n_2,n_3}$, where
\begin{equation}\label{Un1def}
\begin{array}{rl}
U_{n_1,n_2,n_3}
&=\left(\begin{array}{cccc}z-b_{n_1,n_2,n_3}&-c_{n_1,n_2,n_3}&-d_{n_1,n_2,n_3}&-e_{n_1,n_2,n_3}\\
1&0&0&0\\1&0&f_{n_1,n_2,n_3}&0\\1&0&0&g_{n_1,n_2,n_3}\end{array}\right)
\end{array}
\end{equation}
we can get
$$\Psi'_{n_1+1,n_2,n_3}=U'_{n_1,n_2,n_3}\Psi_{n_1,n_2,n_3}+U_{n_1,n_2,n_3}\Psi'_{n_1,n_2,n_3},$$
on the other side, we have
$$\Psi'_{n_1+1,n_2,n_3}(z)=NA_{n_1+1,n_2,n_3}\Psi_{n_1+1,n_2,n_3}(z).$$
This implies
$$U'_{n_1,n_2,n_3}+U_{n_1,n_2,n_3}NA_{n_1,n_2,n_3}=NA_{n_1+1,n_2,n_3}U_{n_1,n_2,n_3}.$$
A simpler calculation shows that
\be
\ba{ll}
b_{n_1,n_2,n_3}=a_1,& c_{n_1+1,n_2,n_3}=c_{n_1,n_2,n_3}+\frac{1}{N},\\
 d_{n_1+1,n_2,n_3}=d_{n_1,n_2,n_3},& e_{n_1+1,n_2,n_3}=e_{n_1,n_2,n_3},\\
 f_{n_1,n_2,n_3}=a_2-a_1,&g_{n_1,n_2,n_3}=a_3-a_1.
\ea
\ee
Similarly, we also can show the following formulas via analyze the other recursion relations of $\Psi_{n_1,n_2,n_3}(z)$,
\be
\ba{ll}
\tilde b_{n_1,n_2,n_3}=a_2,&\tilde c_{n_1+1,n_2,n_3}=\tilde c_{n_1,n_2,n_3},\\
 \tilde d_{n_1+1,n_2,n_3}=\tilde d_{n_1,n_2,n_3}+\frac{1}{N},&\tilde e_{n_1+1,n_2,n_3}=\tilde e_{n_1,n_2,n_3},\\
 \tilde f_{n_1,n_2,n_3}=a_1-a_2,&\tilde g_{n_1,n_2,n_3}=a_3-a_2.
\ea
\ee
\be
\ba{ll}
\hat b_{n_1,n_2,n_3}=a_3,&\hat c_{n_1+1,n_2,n_3}=\hat c_{n_1,n_2,n_3},\\
 \hat d_{n_1+1,n_2,n_3}=\hat d_{n_1,n_2,n_3},&\hat e_{n_1+1,n_2,n_3}=\hat e_{n_1,n_2,n_3}+\frac{1}{N},\\
 \hat f_{n_1,n_2,n_3}=a_1-a_3,&\hat g_{n_1,n_2,n_3}=a_2-a_2.
\ea
\ee

By the initial value $c_{0,n_2,n_3}=d_{n_1,0,n_3}=e_{n_1,n_2,0}=0$, we can get
\be
c_{n_1,n_2,n_3}=\frac{n_1}{N},\quad d_{n_1,n_2,n_3}=\frac{n_2}{N},\quad e_{n_1,n_2,n_3}=\frac{n_3}{N}.
\ee
Noticing our assumption $a_1=a,a_2=0,a_3=-a$, we end the proof.

\end{proof}

We look for a WKB solution of the differential equation
(\ref{diffeqn}) of the form

\be\label{WKBsol}
 \Psi_{n_1,n_2,n_3}(z)=W(z)e^{-N\Lam(z)},
\ee where $\Lam$ is a diagonal matrix. By substituting this form
into (\ref{diffeqn}) we obtain the equation,

\begin{equation}\label{WKBeqn}
-W\Lam'W^{-1}=A+\frac{1}{N}W'W^{-1},
\end{equation}
where $A$ is the matrix of coefficients in (\ref{diffeqn}). By
dropping the last term we reduce it to the eigenvalue problem,

\begin{equation}\label{WKBeigen}
W\Lam'W^{-1}=-A.
\end{equation}
The characteristic polynomial is

\begin{equation}\label{charact}
\ba{rl}
 det[\xi \id+A]=&{}\left|
                  \ba{cccc}
                   \xi-z&t_1&t_2&t_3\\
                   -1&\xi-a_1&0&0\\
                   -1&0&\xi-a_2&0\\
                   -1&0&0&\xi-a_3
                   \ea
                   \right|\\
               =&{}\xi^4-(z+a_1+a_2+a_3)\xi^3\\
               &{}+[(t_1+t_2+t_3)+a_1a_2+a_1a_3+a_2a_3+z(a_1+a_2+a_3)]\xi^2\\
               &{}-[(a_1a_2+a_1a_3+a_2a_3)z+(t_1+t_2)a_3+(t_1+t_3)a_2+(t_2+t_3)a_1+a_1a_2a_3]\xi\\
               &{}+a_1a_2t_3+a_1t_2a_3+t_1a_2a_3+a_1a_2a_3z=0
\ea
\end{equation}
where $t_1=\frac{n_1}{N},t_2=\frac{n_2}{N}$ and $t_3=\frac{n_3}{N}$.

\par
The above equation defines a Riemann surface, which in the case of
interest in this paper (where $n_1=n_3, a_1=-a_3=a, a_2=0$) reduces
to

\begin{equation}\label{Specurve}
\xi^4-z\xi^3+(1-a^2)\xi^2+a^2z\xi-t_2a^2=0.
\end{equation}
This defines the Riemann surface that will play a central role in
the rest of the paper.

\section{Spectral Curve and Riemann Surface}
The Riemann surface is given by (\ref{Specurve}) or, if we solve for
$z$,

\begin{equation}\label{Specurvez}
z=\frac{\xi^4+(1-a^2)\xi^2-t_2a^2}{\xi^3-a^2\xi}.
\end{equation}
There are four inverse functions to (\ref{Specurvez}), which we
choose such that as $z\rightarrow \infty$,

\begin{equation}\label{fourxi}
\ba{l}
 \xi_1(z)=z-\frac{1}{z}+O(\frac{1}{z^3}),\\
 \xi_2(z)=a+\frac{t_1}{z}+O(\frac{1}{z^2}),\\
 \xi_3(z)=\frac{t_2}{z}+O(\frac{1}{z^2}),\\
 \xi_4(z)=-a+\frac{t_3}{z}+O(\frac{1}{z^2}).
\ea
\end{equation}

\par
We need to find the sheet structure of the Riemann surface
(\ref{Specurve}). The critical points of $z(\xi)$ satisfy the
equation

\begin{equation}\label{crieqn}
\xi^6-(1+2a^2)\xi^4+[a^4+(3t_2-1)a^2]\xi^2-t_2a^4=0,
\end{equation}
Let $y=\xi^2$, then the above equation becomes

\begin{equation}\label{crieqnasy}
y^3-(1+2a^2)y^2+[a^4+(3t_2-1)a^2]y-t_2a^4=0
\end{equation}
which is a cubic equation as $y$. The discriminant of this cubic
equation is, and we denote $a^2$ as $b$ and $t_2$ as $t$,

\begin{equation}\label{disminant}
\Dta=(1-t)b^2\Dta_c=(1-t)b^2[8b^3-(9t+15)b^2+(108t^2-90t+6)b+(1-9t)]
\end{equation}
and then we write $\Dta_c$ as a function of $t$,

\begin{equation}\label{disminantc}
\Dta_c(t)=108bt^2+(-9b^2-90b-9)t+8b^3-15b^2+6b+1
\end{equation}
This is a quadratic function of $t$. And the discriminant of this
quadratic function is

\begin{equation}
\Dta_q=-3(b-3)(5b+1)^3,
\end{equation}

\par
We can obtain that if $b>3$, then $\Dta_c>0$ for all $t\in (0,1)$,
then we have $\Dta>0$. That means the cubic equation
(\ref{crieqnasy})of $y$ has three distinct real roots. And in here,
we can go further beyond this, that is, the three distinct real
roots are all positive. We denote the roots by $y_1,y_2$ and $y_3$
(without loss of generally, we can set $y_1<y_2<y_3$), and then set

\begin{equation}\label{roots}
p=\sqrt{y_1},\quad q=\sqrt{y_2},\quad r=\sqrt{y_3}.
\end{equation}

Then the critical points on the $\xi-$plane are $\pm p,\pm q$ and
$\pm r$. The branch points on the $z-$plane are $\pm z_1,\pm z_2$
and $\pm z_3$, where

\begin{equation}\label{zjdef}
z_1=z(\xi=p),\quad z_2=z(\xi=q),\quad z_3=z(\xi=r),\quad
0<z_1<z_2<z_3.
\end{equation}
\par
We can show that $\xi_1,\xi_2,\xi_3$ and $\xi_4$ have analytic extensions
to $\C\backslash ([-z_3,-z_2]\cup [-z_1,z_1]\cup [z_2,z_3])$,$\C\backslash [z_2,z_3]$,$\C\backslash [-z_1,z_1]$ and $\C\backslash [-z_3,-z_2]$.
On the cut $[z_2,z_3]$,
\be
\ba{l}
\xi_{1+}(x)=\ol{\xi_{1-}(x)}=\xi_{2-}(x)=\ol{\xi_{2+}(x)},\quad z_2\le x\le z_3,\\
\im{\xi_{1+}}>0,\qquad z_2<x<z_3,
\ea
\ee
and $\xi_3(x),\xi_4(x)$ are real. On the cut $[-z_1,z_1]$,
\be
\ba{l}
\xi_{1+}(x)=\ol{\xi_{1-}(x)}=\xi_{3-}(x)=\ol{\xi_{3+}(x)},\quad -z_1\le x\le z_1,\\
\im{\xi_{1+}}>0,\qquad -z_1<x<z_1,
\ea
\ee
$\xi_2(x),\xi_4(x)$ are real. On the cut $[-z_3,-z_2]$,
\be
\ba{l}
\xi_{1+}(x)=\ol{\xi_{1-}(x)}=\xi_{4-}(x)=\ol{\xi_{4+}(x)},\quad -z_3\le x\le -z_2,\\
\im{\xi_{1+}}>0,\qquad -z_3<x<-z_2,
\ea
\ee
$\xi_2(x),\xi_3(x)$ are real. See the figure \ref{ran-fig1}.
\begin{figure}[th]
\centering
\includegraphics{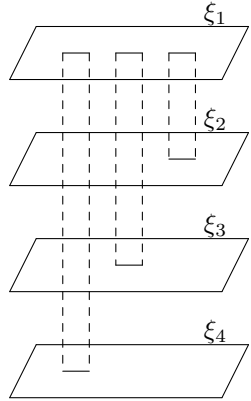}
\caption{Four sheet structure of Riemann surface}\label{ran-fig1}
\end{figure}

\par

We define
\be
\rho(x)=\frac{1}{\pi}\im \xi_{1+}(x),\qquad x\in [-z_3,-z_2]\cup [-z_1,z_1]\cup [z_2,z_3],
\ee
then as $x\in (-z_3,-z_2)\cup (-z_1,z_1)\cup (z_2,z_3)$, we have $\rho(x)>0$ and
\be
\ba{rl}
\int_{z_2}^{z_3}\rho(x)dx&=\int_{z_2}^{z_3}\frac{1}{\pi}\im \xi_{1+}(x)dx\\
{}&=\int_{z_2}^{z_3}\frac{1}{2\pi i}(\xi_{1+}-\ol{\xi_{1+}})(x)dx\\
{}&=\int_{z_2}^{z_3}\frac{1}{2\pi i}(\xi_{2-}-\xi_{2+})(x)dx\\
{}&=\frac{1}{2\pi i}\oint\xi_{2}(s)ds=\frac{1-t}{2}.
\ea
\ee

\par

Similarly, we have
\be
\int_{-z_1}^{z_1}\rho(x)dx=t,\quad \int_{-z_3}^{-z_2}\rho(x)dx=\frac{1-t}{2}.
\ee
And there exists $\rho_j>0,j=1,2,3$ such that
\be
\ba{l}
\rho(x)=\frac{\rho_j}{\pi}|x-z_j|^{\frac{1}{2}}\left(1+O(x-z_j)\right),\quad x\ra z_j,x\in (z_2,z_3),\\
\rho(x)=\frac{\rho_j}{\pi}|x+z_j|^{\frac{1}{2}}\left(1+O(x+z_j)\right),\quad x\ra -z_j,x\in (-z_3,-z_2),\\
\rho(x)=\frac{\rho_j}{\pi}|x-z_j|^{\frac{1}{2}}\left(1+O(x-z_j)\right),\quad x\ra z_j,x\in (-z_1,z_1).
\ea
\ee

The reason is there exists a constant $\rho_3>0$ near the branch point $z_3$ such that as $z\ra z_3$,
\be
\ba{l}
\xi_{1}(z)=r+\rho_3(z-z_3)^{\frac{1}{2}}+O(z-z_3),\\
\xi_2(z)=r-\rho_3(z-z_3)^{\frac{1}{2}}+O(z-z_3).
\ea
\ee
Similarly, there exists a constant $\rho_2>0$ near the branch point $z_2$ such that as $z\ra z_2$,
\be
\ba{l}
\xi_{1}(z)=q-\rho_2(z-z_2)^{\frac{1}{2}}+O(z-z_2),\\
\xi_2(z)=q+\rho_2(z-z_2)^{\frac{1}{2}}+O(z-z_2).
\ea
\ee
There exists a constant $\rho_1>0$ near the branch point $z_1$ such that as $z\ra z_1$,
\be
\ba{l}
\xi_{1}(z)=p-\rho_1(z-z_1)^{\frac{1}{2}}+O(z-z_1),\\
\xi_2(z)=p+\rho_1(z-z_1)^{\frac{1}{2}}+O(z-z_1).
\ea
\ee

\par
Next, we need the integrals of the $\xi-$function,
\be
\lam_j(z)=\int^{z}\xi_j(s)ds,\qquad j=1,2,3,4.
\ee
Here, we choose the integral path so that $\lam_1(z)$ and $\lam_2(z)$ are analytic at $\C\backslash (-\infty,z_3]$, $\lam_3(z)$ is analytic at $\C\backslash (-\infty,z_1]$ and $\lam_4(z)$ is analytic at $\C\backslash (-\infty,-z_2]$. From (\ref{fourxi}), we know as $z\ra \infty$
\be\label{fourlam}
\ba{l}
\lam_1(z)=\frac{1}{2}z^2-\ln{z}+l_1+O(\frac{1}{z^2}),\\
\lam_2(z)=az+\frac{1-t}{2}\ln{z}+l_2+O(\frac{1}{z}),\\
\lam_3(z)=t\ln{z}+l_3+O(\frac{1}{z}),\\
\lam_4(z)=-az+\frac{1-t}{2}\ln{z}+l_4+O(\frac{1}{z}),
\ea
\ee
where $l_1,l_2,l_3,l_4$ are constants, which we choose as follows. We choose $l_1,l_2$ such that
\be
\lam_1(z_3)=\lam_2(z_3)=0,
\ee
choose $\l_3$ and $l_4$ such that
\be
\lam_3(z_1)=\lam_{1+}(z_1)=\lam_{1-}(z_1)-(1-t)\pi i,
\ee
\be
\lam_{4}(-z_2)=\lam_{1+}(-z_2)=\lam_{1-}(-z_2)-(1+t)\pi i.
\ee
Then we have the following jump relations:
\be\label{lamjump}
\ba{ll}
\lam_{1+}(x)=\lam_{2-}(x),\quad \lam_{1-}(x)=\lam_{2+}(x),&x\in [z_2,z_3],\\
\lam_{2+}(x)-\lam_{2-}(x)=(1-t)\pi i,&x\in (-\infty,z_2],\\
\lam_{1+}(x)-\lam_{1-}(x)=-(1-t)\pi i,&x\in [z_1,z_2],\\
\lam_{1+}(x)=\lam_{3-}(x),\quad \lam_{1-}(x)-\lam_{3+}(x)=(1-t)\pi i,&x\in [-z_1,z_1],\\
\lam_{1+}(x)-\lam_{1-}(x)=-(1+t)\pi i,\quad \lam_{3+}(x)-\lam_{3-}(x)=2t\pi i,&x\in [-z_2,-z_1],\\
\lam_{1+}(x)=\lam_{4-}(x),\quad \lam_{1-}(x)-\lam_{4+}(x)=(1+t)\pi i,&x\in [-z_3,-z_2],\\
\lam_{1+}(x)-\lam_{1-}(x)=-2\pi i,\quad \lam_{4+}(x)-\lam_{4-}(x)=(1-t)\pi i,&x\in (-\infty,-z_3].
\ea
\ee

\par
For later use, we state the following propositions.
\begin{lemma}\label{lamdaxiao}
(a)The open interval $(z_2,z_3)$ has a neighborhood $U_1$ in the complex
plane such that when $z\in U_1\backslash (z_2,z_3)$, the real part $\re \lam_2$ is the biggest one among $\lam_j,j=1,2,3,4$;

(b)The open interval $(-z_1,z_1)$ has a neighborhood $U_2$ in the complex
plane such that when $z\in U_1\backslash (-z_1,z_1)$, the real part $\re \lam_3$ is the biggest one among $\lam_j,j=1,2,3,4$;

(c)The open interval $(-z_3,-z_2)$ has a neighborhood $U_3$ in the complex
plane such that when $z\in U_1\backslash (-z_3,-z_2)$, the real part $\re \lam_4$ is the biggest one among $\lam_j,j=1,2,3,4$.
\end{lemma}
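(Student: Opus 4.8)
We prove part~(a); parts~(b) and~(c) follow from it. The spectral curve (\ref{Specurve}) is invariant under $(z,\xi)\mapsto(-z,-\xi)$, which swaps $\xi_2\leftrightarrow\xi_4$, fixes $\xi_1,\xi_3$ up to sign, and carries the band $(z_2,z_3)$ onto $(-z_3,-z_2)$ and $(-z_1,z_1)$ onto itself; since each $\re\lam_j$ is fixed by $\xi_j$ up to an additive constant, this turns~(c) into~(a) and shows~(b) is self-conjugate, so it is enough to prove~(a) together with the analogue of Step~1 below for each of the three bands. I use two global facts throughout: first, $\overline{\xi_j(\bar z)}=\xi_j(z)$ for every $j$ (the normalisation (\ref{fourxi}) is conjugation symmetric), so each $u_j:=\re\lam_j$ is even in $\im z$; second, every jump in (\ref{lamjump}) is purely imaginary and each $\lam_j$ is continuous at its branch points, so each $u_j$ extends to a single-valued continuous function on all of $\C$, harmonic off the three bands.

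\textbf{Step 1: the partner sheet.} On $(z_2,z_3)$ the relations (\ref{lamjump}) give $\lam_{1\pm}=\lam_{2\mp}$, hence $u_1=u_2$ on the open band. Writing $z=x+\I\eps$ and applying the Cauchy--Riemann equations to $\lam_2-\lam_1$ gives $\partial_\eps(u_2-u_1)=-\im(\xi_2-\xi_1)$; at $\eps=0^+$ one has $\xi_2-\xi_1=\xi_{2+}-\xi_{1+}=\xi_{1-}-\xi_{1+}=-2\pi\I\rho(x)$, so $\partial_\eps(u_2-u_1)|_{\eps=0^+}=2\pi\rho(x)>0$ on $(z_2,z_3)$, and by the evenness in $\eps$ the same holds approaching from below. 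Thus $u_2>u_1$ on a two-sided punctured neighborhood of the open band. Near the endpoints, the expansions $\xi_1,\xi_2=r\pm\rho_3(z-z_3)^{1/2}+O(z-z_3)$ (and their analogues at $z_2$) give $\lam_2(z)-\lam_1(z)=-\tfrac{4}{3}\rho_3(z-z_3)^{3/2}+\cdots$, whose real part is positive in the sector of opening $\tfrac{4\pi}{3}$ about the cut; this is what lets the neighborhood $U_1$ in the statement pinch down to $z_2$ and $z_3$. The identical computation on $[-z_1,z_1]$ and $[-z_3,-z_2]$ gives $u_3>u_1$ near $[-z_1,z_1]$ and $u_4>u_1$ near $[-z_3,-z_2]$.

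\textbf{Step 2: the other two sheets.} On $(z_2,z_3)$ one has $u_1=u_2$, while by Step~1 $u_2-u_1>0$ just off the band and $u_3,u_4$ are smooth (harmonic) across it with no linear term in $\eps$ by the conjugation-evenness; since in addition the corner of $u_2$ contributes $+\pi\rho(x)|\eps|$, it suffices to prove the real-variable inequalities $u_1\ge u_3$ and $u_1\ge u_4$ on the open band, extending to the endpoints as in Step~1. I would get these by analysing the nodal set of $g:=u_1-u_3$: it is continuous on $\C$, harmonic off the three bands, vanishes identically on $[-z_1,z_1]$, is strictly negative on a punctured neighborhood of $[-z_1,z_1]$ by the partner-sheet inequality from Step~1, and by (\ref{fourlam}) satisfies $g\to+\infty$ as $\re z\to+\infty$ while $g\to-\infty$ along the imaginary axis. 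A maximum-principle argument on a region bounded by arcs of $\{g=0\}$ and large circles then confines the set $\{g<0\}$ to a neighborhood of $[-z_1,z_1]$ together with unbounded tongues escaping toward $\pm\I\infty$; provided the boundary values $g(z_2),g(z_3)>0$ hold, $[z_2,z_3]$ is separated from $\{g<0\}$, i.e.\ $u_1>u_3$ there. The inequality $u_1\ge u_4$ on $[z_2,z_3]$ follows in the same way from $h:=u_1-u_4$, which vanishes on $[-z_3,-z_2]$ and is negative near it by Step~1. Combining with Step~1 gives $u_2>u_1,u_3,u_4$ on $U_1\setminus(z_2,z_3)$, which is part~(a).

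\textbf{Main obstacle.} Step~1 is a one-line local computation; Step~2 is the genuine work, because it is global. One must locate the nodal sets of $u_1-u_3$ and $u_1-u_4$ relative to the band $[z_2,z_3]$, which forces one to follow the four branches $\xi_j$ and the integration constants $l_1,\dots,l_4$ through the entire cut structure of Figure~\ref{ran-fig1}; concretely it comes down to verifying the boundary signs $\re\lam_3(z_3)<0$, $\re\lam_4(z_3)<0$ and their counterparts at $z_2$ (by integrating $\xi_3,\xi_4$ along the real axis from $\lam_3(z_1)=\lam_{1+}(z_1)$ through $(z_1,z_2)$ and $(z_2,z_3)$, using their reality and signs there and the normalisation (\ref{fourlam})), together with the absence of interior zeros of $u_1-u_3$ and $u_1-u_4$ on the open band. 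As consistency checks one has the mass relations $\int_{z_2}^{z_3}\rho=\int_{-z_3}^{-z_2}\rho=\tfrac{1-t}{2}$, $\int_{-z_1}^{z_1}\rho=t$ and the two symmetries already invoked.
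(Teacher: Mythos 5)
Your Step~1 is, almost verbatim, the paper's entire proof of part~(a): the paper sets $F=\lam_{2+}-\lam_{1+}$, observes that $F$ is purely imaginary on $(z_2,z_3)$ with $F'=\xi_{2+}-\xi_{2-}=-2\pi\I\rho$, and invokes the Cauchy--Riemann equations to get $\re(\lam_2-\lam_1)>0$ on both sides of the band. Your computation $\partial_\eps(u_2-u_1)|_{\eps=0^+}=2\pi\rho>0$ is the same argument in different clothing, and your endpoint analysis via the $\frac32$-power expansion is a reasonable supplement that the paper omits.

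The gap is in your Step~2, and it is a genuine one: the comparison of $\re\lam_2$ with $\re\lam_3$ and $\re\lam_4$ is never actually established. You reduce it (correctly) to the real-axis inequalities $u_1\ge u_3$ and $u_1\ge u_4$ on $[z_2,z_3]$, but the maximum-principle argument you outline is left explicitly conditional on the boundary signs $\re\lam_3(z_3)<0$, $\re\lam_4(z_3)<0$ (and their analogues at $z_2$) and on the absence of interior zeros of $u_1-u_3$, $u_1-u_4$ on the band --- none of which you verify, and these sign computations (where the hypothesis $a^2>3$ and the choice of the constants $l_j$ must ultimately be used) are precisely the content of the claim. To be fair, the paper is no better here: it disposes of these two sheets in one unproved sentence (``First, we have $\re\lam_{3+}<\re\lam_{1-}$, $\re\lam_{4+}<\re\lam_{1-}$ when $z\in U_1\backslash[z_2,z_3]$'') before running the Cauchy--Riemann argument you reproduce. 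So you have correctly located the missing step and proposed a credible Bleher--Kuijlaars-style strategy for it, but you have not closed it. A smaller unfinished point: your symmetry reduction of (c) to (a) requires checking that the constant $\lam_4(z)-\lam_2(-z)$ has zero real part, which is not automatic because $\lam_2$ and $\lam_4$ are normalized at different points ($\lam_2(z_3)=0$ versus $\lam_4(-z_2)=\lam_{1+}(-z_2)$).
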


\begin{proof}
We just prove (a), the following (b) and (c) are similarly.

\par
First, we have $\re \lam_{3+}<\re \lam_{1-},\re \lam_{4+}<\lam_{1-}$ when $z\in U_1\backslash [z_2,z_3]$. Then, define $F(x)=\lam_{2+}(x)-\lam_{1+}(x)$, we have
$F(x)=\ol{\lam_{1+}(x)}-\lam_{1+}(x)$. That is to say, $F(x)$ is pure image on the interval $(z_2,z_3)$. As we have $F'(x)=\xi_{2+}(x)-\xi_{1+}(x)=\xi_{2+}(x)-\xi_{2-}(x)=-2\pi i\rho(x)$, then $\im F'(x)<0$. By Cauchy-Riemann equation, $\re F(x)>0$ as we move from
the interval $(z_2,z_3)$ into the upper half-plane. $\re F(x)>0$ as we move from
the interval $(z_2,z_3)$ into the lower half-plane.
\end{proof}

\section{Transformations of the Riemann-Hilbert Problem}
\subsection{First Transform}
Using the functions $\lam_j$ and constants $l_j,j=1,2,3,4$, which are defined in the previous section, we define
\be\label{ran-1trans}
T(z)=diag\left(e^{-nl_1},e^{-nl_2},e^{-nl_3},e^{-nl_4}\right)Y(z)diag\left(e^{n(\lam_1(z)-\frac{1}{2}z^2)},e^{n(\lam_2(z)-az)},e^{n\lam_3(z)},e^{n(\lam_4(z)+az)}\right),
\ee
\par
Then by (\ref{ran-Yjump}) and (\ref{ran-1trans}), we have $T_+(x)=T_-(x)J_{T}(x),x\in\R$, where
\be\label{ran-Tjump}
J_{T}(x)=\left(
          \ba{cccc}
          e^{n(\lam_{1+}-\lam_{1-})}&e^{n(\lam_{2+}-\lam_{1-})}&e^{n(\lam_{3+}-\lam_{1-})}&e^{n(\lam_{4+}-\lam_{1-})}\\
          0&e^{n(\lam_{2+}-\lam_{2-})}&0&0\\
          0&0&e^{n(\lam_{3+}-\lam_{3-})}&0\\
          0&0&0&e^{n(\lam_{4+}-\lam_{4-})}
          \ea
          \right)
\ee
Be more exact, on $[z_2,z_3]$
\be
J_{T}(x)=\left(
          \ba{cccc}
          e^{n(\lam_{1+}-\lam_{2+})}&1&e^{n(\lam_{3}-\lam_{1-})}&e^{n(\lam_{4}-\lam_{1-})}\\
          0&e^{n(\lam_{1-}-\lam_{2-})}&0&0\\
          0&0&1&0\\
          0&0&0&1
          \ea
          \right);
\ee
on $[-z_1,z_1]$
\be
J_{T}(x)=\left(
          \ba{cccc}
          e^{n(\lam_{1+}-\lam_{3+})}&e^{n(\lam_{2+}-\lam_{1-})}&1&e^{n(\lam_{4}-\lam_{1-})}\\
          0&1&0&0\\
          0&0&e^{n(\lam_{1-}-\lam_{3-})}&0\\
          0&0&0&1
          \ea
          \right);
\ee
on $[-z_3,-z_2]$
\be
J_{T}(x)=\left(
          \ba{cccc}
          e^{n(\lam_{1+}-\lam_{1-})}&e^{n(\lam_{2+}-\lam_{1-})}&e^{n(\lam_{3+}-\lam_{1-})}&1\\
          0&1&0&0\\
          0&0&1&0\\
          0&0&0&e^{n(\lam_{4+}-\lam_{4-})}
          \ea
          \right);
\ee
on $\R\backslash ([z_2,z_3]\cup [-z_1,z_1]\cup [-z_3,-z_2])$
\be\label{ran-Tjumprealsheng}
J_{T}(x)=\left(
          \ba{cccc}
          1&e^{n(\lam_{2+}-\lam_{1-})}&e^{n(\lam_{3+}-\lam_{1-})}&e^{n(\lam_{4+}-\lam_{1-})}\\
          0&1&0&0\\
          0&0&1&0\\
          0&0&0&1
          \ea
          \right).
\ee
\par

By (\ref{ran-Yasy}), (\ref{fourlam}) and (\ref{ran-1trans}), we know the asymptotics of $T$ is
\be
T(z)=\id+O(\frac{1}{z}),\qquad z\ra \infty.
\ee

\par

Thus $T$ solves the Riemann-Hilbert problem
\begin{itemize}
 \item $T$ is analytic on $\C\backslash \R$,

 \item
  \be\label{ran-TRHP}
  T_+(x)=T_-(x)J_T(x),\quad x\in \R,
  \ee

 \item As $z\ra \infty$,
 \be
 T(z)=\id+O(\frac{1}{z}).
 \ee
\end{itemize}

Inserting (\ref{ran-1trans}) into (\ref{ran-KnY}), we see that the kernel $K_n$ can be expressed in terms of $T$ as
follows:
\be\label{ran-KnT}
K_n(x,y)=\frac{e^{\frac{1}{4}n(x^2-y^2)}}{2\pi i (x-y)}\left(\ba{cccc}0&e^{n\lam_{2+}(y)}&e^{n\lam_{3+}(y)}&e^{n\lam_{4+}(y)}\ea\right)T_+^{-1}(y)T_+(x)\left(\ba{c}e^{-n\lam_{1+}(x)}\\0\\0\\0\ea\right).
\ee

\subsection{Second Transform}
The second transformation of the Rirmann-Hilbert problem is opening of lenses. Consider a lens with
vertices $z_2,z_3$, see figure \ref{ran-fig2}.
\begin{figure}[th]
\centering
\includegraphics{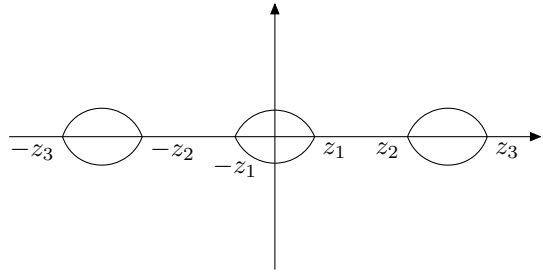}
\caption{The lens of vertices of $\pm z_j,j=1,2,3$.}\label{ran-fig2}
\end{figure}
The lens is contained in the neighborhood $U_1$ of $(z_2,z_3)$. We have the factorization,
\be
\ba{l}
\left(
          \ba{cccc}
          e^{n(\lam_{1+}-\lam_{2+})}&1&e^{n(\lam_{3}-\lam_{1-})}&e^{n(\lam_{4}-\lam_{1-})}\\
          0&e^{n(\lam_{1-}-\lam_{2-})}&0&0\\
          0&0&1&0\\
          0&0&0&1
          \ea
          \right)\\
=\left(\ba{cccc}1&0&0&0\\e^{n(\lam_1-\lam_2)_-}&1&-e^{n(\lam_3-\lam_2)_-}&-e^{n(\lam_4-\lam_2)_-}\\0&0&1&0\\0&0&0&1\ea\right)
\left(\ba{cccc}0&1&0&0\\-1&0&0&0\\0&0&1&0\\0&0&0&1\ea\right)\\
\left(\ba{cccc}1&0&0&0\\e^{n(\lam_1-\lam_2)_+}&1&e^{n(\lam_3-\lam_2)_+}&e^{n(\lam_4-\lam_2)_+}\\0&0&1&0\\0&0&0&1\ea\right).
\ea
\ee
Set
\be\label{ran-2S1}
S(z)=\left\{
\ba{l}
T(z)\left(\ba{cccc}1&0&0&0\\-e^{n(\lam_1-\lam_2)}&1&-e^{n(\lam_3-\lam_2)}&-e^{n(\lam_4-\lam_2)}\\0&0&1&0\\0&0&0&1\ea\right)\mbox{ ͸¾µµÄÉϱßÇøÓò}\\
T(z)\left(\ba{cccc}1&0&0&0\\e^{n(\lam_1-\lam_2)}&1&-e^{n(\lam_3-\lam_2)}&-e^{n(\lam_4-\lam_2)}\\0&0&1&0\\0&0&0&1\ea\right)\mbox{ ͸¾µµÄϱßÇøÓò}
\ea
\right.
\ee

\par
Then (\ref{ran-TRHP}) and (\ref{ran-2S1}) imply that
\be
S_+(x)=S_-(x)J_S(x),\quad J_S(x)=\left(\ba{cccc}0&1&0&0\\-1&0&0&0\\0&0&1&0\\0&0&0&1\ea\right),\quad x\in [z_2,z_3].
\ee

\par
Similarly, consider a lens with vertices $-z_1,z_1$, that is contained in $U_2$ and set
\be\label{ran-2S2}
S(z)=\left\{
\ba{l}
T(z)\left(\ba{cccc}1&0&0&0\\0&1&0&0\\-e^{n(\lam_1-\lam_3)}&-e^{n(\lam_2-\lam_3)}&1&-e^{n(\lam_4-\lam_3)}\\0&0&0&1\ea\right)\mbox{ ͸¾µµÄÉϱßÇøÓò}\\
T(z)\left(\ba{cccc}1&0&0&0\\0&1&0&0\\e^{n(\lam_1-\lam_3)}&-e^{n(\lam_2-\lam_3)}&1&-e^{n(\lam_4-\lam_3)}\\0&0&0&1\ea\right)\mbox{ ͸¾µµÄϱßÇøÓò}
\ea
\right.
\ee

\par
Then (\ref{ran-TRHP}) and (\ref{ran-2S2}) imply that
\be
S_+(x)=S_-(x)J_S(x),\quad J_S(x)=\left(\ba{cccc}0&0&1&0\\0&1&0&0\\-1&0&0&0\\0&0&0&1\ea\right),\quad x\in [-z_1,z_1].
\ee

\par
Consider a lens with vertices $-z_3,-z_2$, that is contained in $U_3$ and set
\be\label{ran-2S3}
S(z)=\left\{
\ba{l}
T(z)\left(\ba{cccc}1&0&0&0\\0&1&0&0\\0&0&1&0\\-e^{n(\lam_1-\lam_4)}&-e^{n(\lam_2-\lam_4)}&-e^{n(\lam_3-\lam_4)}&1\ea\right)\mbox{ ͸¾µµÄÉϱßÇøÓò}\\
T(z)\left(\ba{cccc}1&0&0&0\\0&1&0&0\\0&0&1&0\\e^{n(\lam_1-\lam_4)}&-e^{n(\lam_2-\lam_4)}&-e^{n(\lam_3-\lam_4)}&1\ea\right)\mbox{ ͸¾µµÄϱßÇøÓò}
\ea
\right.
\ee
\par
Then (\ref{ran-TRHP}) and (\ref{ran-2S3}) imply that
\be
S_+(x)=S_-(x)J_S(x),\quad J_S(x)=\left(\ba{cccc}0&0&0&1\\0&1&0&0\\0&0&1&0\\-1&0&0&0\ea\right),\quad x\in [-z_3,-z_2].
\ee

\par
And set
\be\label{ran-2S}
S(z)=T(z)\qquad \mbox{outside of the lens region}.
\ee
\par
Then we have jumps on the boundary of the lenses,
\be\label{ran-SRHP}
S_+(z)=S_-(z)J_S(z)
\ee
where the contours are oriented from left to right (that is, from $-z_3$ to $-z_2$, or from $-z_1$ to
$z_1$, or from $z_2$ to $z_3$). The jump matrix in (\ref{ran-SRHP}) is
\small
\be\label{ran-Sjump}
\ba{l}
J_S(z)=\left(\ba{cccc}1&0&0&0\\e^{n(\lam_1-\lam_2)}&1&e^{n(\lam_3-\lam_2)}&e^{n(\lam_4-\lam_2)}\\0&0&1&0\\0&0&0&1\ea\right),\mbox{on the upper boundary of the $[z_2,z_3]-$lens}\\
J_S(z)=\left(\ba{cccc}1&0&0&0\\e^{n(\lam_1-\lam_2)}&1&-e^{n(\lam_3-\lam_2)}&-e^{n(\lam_4-\lam_2)}\\0&0&1&0\\0&0&0&1\ea\right),\mbox{on the lower boundary of the $[z_2,z_3]-$lens}\\
J_S(z)=\left(\ba{cccc}1&0&0&0\\0&1&0&0\\e^{n(\lam_1-\lam_3)}&e^{n(\lam_2-\lam_3)}&1&e^{n(\lam_4-\lam_3)}\\0&0&0&1\ea\right),\mbox{on the upper boundary of the $[-z_1,z_1]-$lens}\\
J_S(z)=\left(\ba{cccc}1&0&0&0\\0&1&0&0\\e^{n(\lam_1-\lam_3)}&-e^{n(\lam_2-\lam_3)}&1&-e^{n(\lam_4-\lam_3)}\\0&0&0&1\ea\right),\mbox{on the lower boundary of the $[-z_1,z_1]-$lens}\\
J_S(z)=\left(\ba{cccc}1&0&0&0\\0&1&0&0\\0&0&1&0\\e^{n(\lam_1-\lam_4)}&e^{n(\lam_2-\lam_4)}&e^{n(\lam_3-\lam_4)}&1\ea\right),\mbox{on the upper boundary of the $[-z_3,-z_2]-$lens}\\
J_S(z)=\left(\ba{cccc}1&0&0&0\\0&1&0&0\\0&0&1&0\\e^{n(\lam_1-\lam_4)}&-e^{n(\lam_2-\lam_4)}&-e^{n(\lam_3-\lam_4)}&1\ea\right),\mbox{on the lower boundary of the $[-z_3,-z_2]-$lens}
\ea
\ee
\normalsize
And
\be\label{ran-Sjumpreal}
S_+(x)=S_-(x)J_S(x),\qquad J_S(x)=J_T(x),\quad x\in (-\infty,-z_3]\cup [-z_2,-z_1]\cup [z_1,z_2]\cup [z_3,\infty).
\ee

\par

Thus $S$ solves the Riemann-Hilbert problem
\begin{itemize}
 \item $S$ is analytic on $\C\backslash (\R\cup \Gam)$, where $\Gam$ denotes the boundary of the lens,

 \item
  \be\label{ran-TRHP}
  S_+(z)=S_-(z)J_S(z),\quad z\in \R\cup \Gam,
  \ee

 \item As $z\ra \infty$,
 \be
 S(z)=\id+O(\frac{1}{z}).
 \ee
\end{itemize}

The kernel $K_n(x,y)$ is expressed in terms of $S$ as follows. By (\ref{ran-KnT}) and the definitions (\ref{ran-2S1}), (\ref{ran-2S2}), (\ref{ran-2S3}), for $x,y$ in $(z_2,z_3)$, we have
\be\label{ran-KnS1}
K_n(x,y)=\frac{e^{\frac{n}{4}(x^2-y^2)}}{2\pi i(x-y)}\left(\ba{cccc}-e^{n\lam_{1+}(y)}&e^{n\lam_{2+}(y)}&0&0\ea\right)S_{+}^{-1}(y)S_+(x)\left(\ba{c}e^{-n\lam_{1+}(x)}\\e^{-n\lam_{2+}(x)}\\0\\0\ea\right),
\ee
for $x,y$ in $(-z_1,z_1)$, we have
\be\label{ran-KnS2}
K_n(x,y)=\frac{e^{\frac{n}{4}(x^2-y^2)}}{2\pi i(x-y)}\left(\ba{cccc}-e^{n\lam_{1+}(y)}&0&e^{n\lam_{3+}(y)}&0\ea\right)S_{+}^{-1}(y)S_+(x)\left(\ba{c}e^{-n\lam_{1+}(x)}\\0\\e^{-n\lam_{3+}(x)}\\0\ea\right),
\ee
while for $x,y$ in $(-z_3,-z_2)$, we have
\be\label{ran-KnS3}
K_n(x,y)=\frac{e^{\frac{n}{4}(x^2-y^2)}}{2\pi i(x-y)}\left(\ba{cccc}-e^{n\lam_{1+}(y)}&0&0&e^{n\lam_{4+}(y)}\ea\right)S_{+}^{-1}(y)S_+(x)\left(\ba{c}e^{-n\lam_{1+}(x)}\\0\\0\\e^{-n\lam_{4+}(x)}\ea\right).
\ee

\par
Since $\lam_{1+}$ and $\lam_{2+}$ are complex conjugates on $(z_2,z_3)$, we can rewrite (\ref{ran-KnS1}) for $x,y$ in $(z_2,z_3)$ as
\be\label{ran-KnS11}
K_n(x,y)=\frac{e^{n(h(y)-h(x))}}{2\pi i(x-y)}\left(\ba{cccc}-e^{ni\im\lam_{1+}(y)}&e^{-ni\im\lam_{1+}(y)}&0&0\ea\right)S_{+}^{-1}(y)S_+(x)\left(\ba{c}e^{-ni\im\lam_{1+}(x)}\\e^{ni\im\lam_{1+}(x)}\\0\\0\ea\right),
\ee
where $h(x)=-\frac{1}{4}x^2+\re \lam_{1+}(x)$.
Similarly, for $x,y$ in $(-z_1,z_1)$, we have
\be\label{ran-KnS22}
K_n(x,y)=\frac{e^{n(h(y)-h(x))}}{2\pi i(x-y)}\left(\ba{cccc}-e^{ni\im\lam_{1+}(y)}&0&e^{-ni\im\lam_{1+}(y)}&0\ea\right)S_{+}^{-1}(y)S_+(x)\left(\ba{c}e^{-ni\im\lam_{1+}(x)}\\0\\e^{ni\im\lam_{1+}(x)}\\0\ea\right),
\ee
for $x,y$ in $(-z_3,-z_2)$, we have
\be\label{ran-KnS33}
K_n(x,y)=\frac{e^{n(h(y)-h(x))}}{2\pi i(x-y)}\left(\ba{cccc}-e^{ni\im\lam_{1+}(y)}&0&0&e^{-ni\im\lam_{1+}(y)}\ea\right)S_{+}^{-1}(y)S_+(x)\left(\ba{c}e^{-ni\im\lam_{1+}(x)}\\0\\0\\e^{ni\im\lam_{1+}(x)}\ea\right).
\ee

\subsection{Model Riemann-Hilbert problem}
As $n\ra \infty$, the jump matrix $J_S(z)$ is exponentially close to the identity matrix at every $z$ outside of $[-z_3,-z_2]\cup [-z_1,z_1]\cup [z_2,z_3]$. This follows from (\ref{ran-Sjump}) and Lemma \ref{lamdaxiao} for $z$ on the boundary of the lenses, and (\ref{ran-Sjumpreal}), (\ref{ran-Tjumprealsheng}), Lemma \ref{lamdaxiao} for $z$ on the intervals $(-\infty,-z_3]\cup [-z_2,-z_1]\cup [z_1,z_2]\cup [z_3,\infty)$.

\par
In this subsection, we solve the following model Riemann-Hilbert problem, where we ignore the
exponentially small jumps: find $M:\C\backslash ([-z_3,-z_2]\cup [-z_1,z_1]\cup [z_2,z_3])\ra \C^{4\times 4}$ such that
\begin{itemize}
 \item $M$ is analytic on $\C\backslash ([-z_3,-z_2]\cup [-z_1,z_1]\cup [z_2,z_3])$,

 \item
  \be\label{ran-ModelRHP}
  M_+(x)=M_-(x)J_S(x),\quad x\in (-z_3,-z_2)\cup (-z_1,z_1)\cup (z_2,z_3),
  \ee

 \item As $z\ra \infty$,
 \be\label{ran-Modelasy}
 M(z)=\id+O(\frac{1}{z})
 \ee
\end{itemize}
This problem is similar to the RH problem considered in [22, Sect. 6.1].We also follow
a similar method to solve it.

\par
We lift the model Riemann-Hilbert problem to the Riemann surface of (\ref{Specurve}) with the sheet structure
as in figure \ref{ran-fig1}. Consider
\be
\ba{l}
\Omega_1=\xi_1(\C\backslash ([-z_3,-z_2]\cup [-z_1,z_1]\cup [z_2,z_3])),\\
\Omega_2=\xi_2(\C\backslash [z_2,z_3]),\\
\Omega_3=\xi_3(\C\backslash [-z_1,z_1]),\\
\Omega_4=\xi_4(\C\backslash [-z_3,-z_2]).
\ea
\ee
Then $\Omega_1,\Omega_2,\Omega_3,\Omega_4$ give a partition of the complex plane into four regions, see figure \ref{ran-fig3}.
\begin{figure}[th]
\centering
\includegraphics{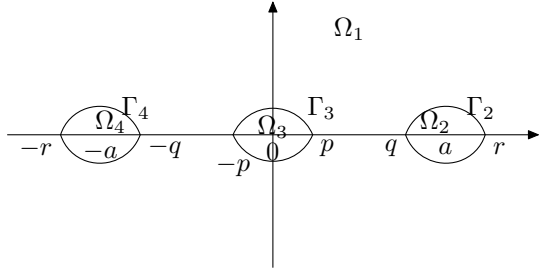}
\caption{Partition of the complex $\xi$ plane}\label{ran-fig3}
\end{figure}
The regions $\Omega_2,\Omega_3,\Omega_4$ are bounded, $a\in \Omega_2$, $0\in \Omega_3$, $-a\in\Omega_4$, with the symmetry conditions
\be
\bar\Omega_2=\Omega_2,\quad \bar \Omega_4=\Omega_4,\quad \bar \Omega_3=\Omega_3,\quad \Omega_2=-\Omega_4,\quad \Omega_3=-\Omega_3.
\ee

\par
Denote by $\Gam_k$ the boundary of $\Omega_k,k=2,3,4$, see figure \ref{ran-fig3}. Then we have
\be
\ba{l}
\xi_{1+}([z_2,z_3])=\xi_{2-}([z_2,z_3])=\Gam^{+}_2=\Gam_2\cap \{\im z\ge 0\},\\
\xi_{1-}([z_2,z_3])=\xi_{2+}([z_2,z_3])=\Gam^{-}_2=\Gam_2\cap \{\im z\le 0\},\\
\xi_{1+}([-z_1,z_1])=\xi_{3-}([-z_1,z_1])=\Gam^{+}_3=\Gam_3\cap \{\im z\ge 0\},\\
\xi_{1-}([-z_1,z_1])=\xi_{3+}([-z_1,z_1])=\Gam^{-}_3=\Gam_3\cap \{\im z\le 0\},\\
\xi_{1+}([-z_3,-z_2])=\xi_{4-}([-z_3,-z_2])=\Gam^{+}_4=\Gam_4\cap \{\im z\ge 0\},\\
\xi_{1-}([-z_3,-z_2])=\xi_{4+}([-z_3,-z_2])=\Gam^{-}_4=\Gam_4\cap \{\im z\le 0\}.
\ea
\ee

\par
We are looking for a solution $M$ in the following form:
\be
M(z)=\left(\ba{cccc}
              M_{1}(\xi_1(z))&M_1(\xi_2(z))&M_1(\xi_3(z))&M_1(\xi_4(z))\\
               M_{2}(\xi_1(z))&M_2(\xi_2(z))&M_2(\xi_3(z))&M_2(\xi_4(z))\\
                M_{3}(\xi_1(z))&M_3(\xi_2(z))&M_3(\xi_3(z))&M_3(\xi_4(z))\\
                 M_{4}(\xi_1(z))&M_4(\xi_2(z))&M_4(\xi_3(z))&M_4(\xi_4(z))\\
              \ea\right),
\ee
where $M_1(\xi),M_{2}(\xi),M_3(\xi),M_4(\xi)$ are four analytic functions on $\C\backslash (\Gam_2\cup\Gam_3\cup\Gam_4)$. To satisfy
the jump condition (\ref{ran-ModelRHP}), we need the following relations for $k=1,2,3,4$,
\be\label{ran-ModelMk}
\ba{l}
M_{k+}(\xi)=-M_{k-}(\xi),\qquad \xi\in \Gam_2^{+}\cup \Gam_3^{+}\cup \Gam_4^{+},\\
M_{k+}(\xi)=M_{k-}(\xi),\qquad \xi\in \Gam_2^{-}\cup \Gam_3^{-}\cup \Gam_4^{-}.
\ea
\ee
Since $\xi_1(\infty)=\infty,\xi_2(\infty)=a,\xi_3(\infty)=0,\xi_4(\infty)=-a$, then to satisfy (\ref{ran-Modelasy}), we have
\be\label{ran-ModelMdian}
\ba{llll}
M_1(\infty)=1,&M_1(a)=0,&M_1(0)=0,&M_1(-a)=0,\\
M_2(\infty)=0,&M_2(a)=1,&M_2(0)=0,&M_2(-a)=0,\\
M_3(\infty)=0,&M_3(a)=0,&M_3(0)=1,&M_4(-a)=0,\\
M_4(\infty)=0,&M_4(a)=0,&M_3(0)=0,&M_4(-a)=1.
\ea
\ee

\par
Equation (\ref{ran-ModelMk}) and (\ref{ran-ModelMdian}) have the following solution
\be
\ba{ll}
M_1(\xi)=\frac{\xi(\xi^2-a^2)}{\sqrt{(\xi^2-p^2)(\xi^2-q^2)(\xi^2-r^2)}},&M_2(\xi)=c_2\frac{\xi(\xi+a)}{\sqrt{(\xi^2-p^2)(\xi^2-q^2)(\xi^2-r^2)}},\\
M_3(\xi)=c_3\frac{\xi^2-a^2}{\sqrt{(\xi^2-p^2)(\xi^2-q^2)(\xi^2-r^2)}},&M_4(\xi)=c_4\frac{\xi(\xi-a)}{\sqrt{(\xi^2-p^2)(\xi^2-q^2)(\xi^2-r^2)}}.
\ea
\ee
Here the constants $c_2,c_3,c_4$ are defined by $M_2(a)=1,M_3(0)=1,M_4(-a)=1$. Notice
\be
(\xi^2-p^2)(\xi^2-q^2)(\xi^2-r^2)=\xi^6-(1+2a^2)\xi^4+[a^4+(3t-1)a^2]\xi^2-ta^4,
\ee
hence
\be
M_2(a)=c_2\frac{2a^2}{\sqrt{2(t-1)a^4}}.
\ee
By taking into account the cuts of $M_2(\xi)$, we have
\be
M_2(a)=c_2i\sqrt{\frac{2}{1-t}},
\ee
hence
\be
c_2=-i\sqrt{\frac{1-t}{2}}.
\ee
Similarly, we have
\be
c_3=-i\sqrt{t},\qquad c_4=-i\sqrt{\frac{1-t}{2}}.
\ee

Thus, the solution to the model Riemann-Hilbert is given as following,
\tiny
\be
\ba{l}
M(z)=\\
\left(\ba{cccc}
              \frac{\xi_1(\xi_1^2-a^2)}{\sqrt{(\xi_1^2-p^2)(\xi_1^2-q^2)(\xi_1^2-r^2)}}&\frac{\xi_2(\xi_2^2-a^2)}{\sqrt{(\xi_2^2-p^2)(\xi_2^2-q^2)(\xi_2^2-r^2)}}&\frac{\xi_3(\xi_3^2-a^2)}{\sqrt{(\xi_3^2-p^2)(\xi_3^2-q^2)(\xi_3^2-r^2)}}&\frac{\xi_4(\xi_4^2-a^2)}{\sqrt{(\xi_4^2-p^2)(\xi_4^2-q^2)(\xi_4^2-r^2)}}\\
               -i\frac{\sqrt{1-t}\xi_1(\xi_1+a)}{\sqrt{2(\xi_1^2-p^2)(\xi_1^2-q^2)(\xi_1^2-r^2)}}&-i\frac{\sqrt{1-t}\xi_2(\xi_2+a)}{\sqrt{2(\xi_2^2-p^2)(\xi_2^2-q^2)(\xi_2^2-r^2)}}&-i\frac{\sqrt{1-t}\xi_3(\xi_3+a)}{\sqrt{2(\xi_3^2-p^2)(\xi_3^2-q^2)(\xi_3^2-r^2)}}&-i\frac{\sqrt{1-t}\xi_4(\xi_4+a)}{\sqrt{2(\xi_4^2-p^2)(\xi_4^2-q^2)(\xi_4^2-r^2)}}\\
                -i\frac{\sqrt{t}(\xi_1^2-a^2)}{\sqrt{(\xi_1^2-p^2)(\xi_1^2-q^2)(\xi_1^2-r^2)}}&-i\frac{\sqrt{t}(\xi_2^2-a^2)}{\sqrt{(\xi_2^2-p^2)(\xi_2^2-q^2)(\xi_2^2-r^2)}}&-i\frac{\sqrt{t}(\xi_3^2-a^2)}{\sqrt{(\xi_3^2-p^2)(\xi_3^2-q^2)(\xi_3^2-r^2)}}&-i\frac{\sqrt{t}(\xi_4^2-a^2)}{\sqrt{(\xi_4^2-p^2)(\xi_4^2-q^2)(\xi_4^2-r^2)}}\\
                 -i\frac{\sqrt{1-t}\xi_1(\xi_1-a)}{\sqrt{2(\xi_1^2-p^2)(\xi_1^2-q^2)(\xi_1^2-r^2)}}&-i\frac{\sqrt{1-t}\xi_2(\xi_2-a)}{\sqrt{2(\xi_2^2-p^2)(\xi_2^2-q^2)(\xi_2^2-r^2)}}&-i\frac{\sqrt{1-t}\xi_3(\xi_3-a)}{\sqrt{2(\xi_3^2-p^2)(\xi_3^2-q^2)(\xi_3^2-r^2)}}&-i\frac{\sqrt{1-t}\xi_4(\xi_4-a)}{\sqrt{2(\xi_4^2-p^2)(\xi_4^2-q^2)(\xi_4^2-r^2)}}\\
              \ea\right)
\ea
\ee
\normalsize

\par
The solution $M$ to the model Riemann-Hilbert problem will be used to construct a parametrix for the Riemann-Hilbert problem
for $S$ outside of a small neighborhood of the edge points. Namely, we will fix some $r>0$ and consider the disks of radius $r$ around the edge points. At the edge points $M$ is not analytic and in a neighborhood of the edge points the parametrix is constructed
differently.

\subsection{Parametrix at Edge Points}

We consider small disks $D(\pm z_j,r),j=1,2,3$ with radius $r>0$ and centered at the edge points,
and look for a local parametrix $P$ defined on the union of the six disks such that

\begin{itemize}
  \item $P$ is analytic at $D(\pm z_j,r)\backslash(\R\cup \Gam)$,

  \item
   \be
   P_+(z)=P_-(z)J_S(z),\qquad z\in (\R\cup \Gam)\cap D(\pm z_j,r),
   \ee

 \item As $n\ra \infty$,
  \be\label{ran-Pasy}
  P(z)=\left(\id+O(\frac{1}{z})\right)M(z),\quad \mbox{uniformly for $z\in \partial D(\pm z_j,r)$}.
  \ee
\end{itemize}

We consider here the edge point $z_3$ in detail. We note that as $z\ra z_3$,
\be
\ba{l}
\lam_1(z)=r(z-z_3)+\frac{2\rho_1}{3}(z-z_3)^{\frac{3}{2}}+O(z-z_3)^2,\\
\lam_2(z)=r(z-z_3)-\frac{2\rho_1}{3}(z-z_3)^{\frac{3}{2}}+O(z-z_3)^2,
\ea
\ee
thus, we have as $z\ra z_3$,
\be\label{ran-lamda12}
\lam_1(z)-\lam_2(z)=\frac{4\rho_1}{3}(z-z_3)^{\frac{3}{2}}+O(z-z_3)^{\frac{5}{2}}.
\ee
Hence
\be
\beta(z)=\left[\frac{3}{4}(\lam_1(z)-\lam_2(z))\right]^{\frac{2}{3}}
\ee
is analytic at $z_3$, real-valued on the real axis near $z_3$, and $\beta'(z)=\rho_1^{\frac{2}{3}}>0$. So $\beta$ is a
conformal map from $D(z_3,r)$ to a convex neighborhood of the origin, if $r$ is sufficiently
small. We take $\Gam$ near $z_3$ such that
\be
\beta(\Gam\cap D(z_3,r))\subset \{z|arg(z)=\pm \frac{2\pi}{3}\}.
\ee
Thus, $\Gam$ and $\R$ divide the disk $D(z_3,r)$ into four regions numbered $\Rmnum{1},\Rmnum{2},\Rmnum{3}$ and $\Rmnum{4}$ such that $0<arg \beta(z)<\frac{2\pi}{3},\frac{2\pi}{3}<\arg\beta(z)<\pi,-\pi<arg\beta(z)<-\frac{2\pi}{3}$ and $-\frac{2\pi}{3}<arg\beta(z)<0$ for $z$ in regions $\Rmnum{1},\Rmnum{2},\Rmnum{3}$ and $\Rmnum{4}$, respectively.

\par
Recall that the jumps $J_S$ near $z_3$ are given by
\be
\ba{l}
J_S=\left(\ba{cccc}0&1&0&0\\-1&0&0&0\\0&0&1&0\\0&0&0&1\ea\right),\quad x\in [z_3-r,z_3),\\
J_S=\left(\ba{cccc}1&0&0&0\\e^{n(\lam_1-\lam_2)}&1&e^{n(\lam_3-\lam_2)}&e^{n(\lam_4-\lam_2)}\\0&0&1&0\\0&0&0&1\ea\right),\mbox{on the upper boundary of the lens in $D(z_3,r)$}\\
J_S=\left(\ba{cccc}1&0&0&0\\e^{n(\lam_1-\lam_2)}&1&-e^{n(\lam_3-\lam_2)}&-e^{n(\lam_4-\lam_2)}\\0&0&1&0\\0&0&0&1\ea\right),\mbox{on the lower boundary of the lens in $D(z_3,r)$}\\
J_S=\left(
          \ba{cccc}
          1&e^{n(\lam_{2}-\lam_{1})}&e^{n(\lam_{3}-\lam_{1})}&e^{n(\lam_{4}-\lam_{1})}\\
          0&1&0&0\\
          0&0&1&0\\
          0&0&0&1
          \ea
          \right),\quad x\in (z_3,z_3+r].
\ea
\ee

We write
\be
\tilde P=\left\{
\ba{ll}
P\left(\ba{cccc}1&0&0&0\\0&1&-e^{n(\lam_3-\lam_2)}&-e^{n(\lam_4-\lam_2)}\\0&0&1&0\\0&0&0&1\ea\right),&\mbox{in regions $\Rmnum{1},\Rmnum{4}$},\\
P,&\mbox{in regions $\Rmnum{2},\Rmnum{3}$}.
\ea
\right.
\ee

Then the jumps $\tilde P$ are $\tilde P_+=\tilde P_-J_{\tilde P}$, where
\be\label{ran-tildePjump}
\ba{l}
J_{\tilde P}=\left(\ba{cccc}0&1&0&0\\-1&0&0&0\\0&0&1&0\\0&0&0&1\ea\right),\quad x\in [z_3-r,z_3),\\
J_{\tilde P}=\left(\ba{cccc}1&0&0&0\\e^{n(\lam_1-\lam_2)}&1&0&0\\0&0&1&0\\0&0&0&1\ea\right),\mbox{on the upper boundary of the lens in $D(z_3,r)$}\\
J_{\tilde P}=\left(\ba{cccc}1&0&0&0\\e^{n(\lam_1-\lam_2)}&1&0&0\\0&0&1&0\\0&0&0&1\ea\right),\mbox{on the lower boundary of the lens in $D(z_3,r)$}\\
J_{\tilde P}=\left(
          \ba{cccc}
          1&e^{n(\lam_{2}-\lam_{1})}&0&0\\
          0&1&0&0\\
          0&0&1&0\\
          0&0&0&1
          \ea
          \right),\quad x\in (z_3,z_3+r].
\ea
\ee

\par
The Riemann-Hilbert problem for $\tilde P$ is essentially a $2\times 2$ problem, since the jumps (\ref{ran-tildePjump}) are
non-trivial only in the upper $2\times 2$ block. A solution can be constructed in a standard
way out of Airy functions. The Airy function $Ai(z)$ solves the differential equation
\be\label{ran-airy}
y^{''}=zy.
\ee

For any $\eps>0$, in the sector $\pi+\eps\le arg z\le \pi-\eps$, it has the asymptotics as $z\ra\infty$,
\be
Ai(z)=\frac{1}{2\sqrt{\pi}z^{\frac{1}{4}}}e^{-\frac{2}{3}z^{\frac{3}{2}}}\left(1+O(z^{-\frac{3}{2}})\right).
\ee

The functions $Ai(\omega z),Ai(\omega^2z)$, where $\omega=e^{\frac{2\pi i}{3}}$, also solve the equation (\ref{ran-airy}). And we have the linear relation,
\be
Ai(z)+\omega Ai(\omega z)+\omega^2Ai(\omega^2z)=0.
\ee

\par
Write
\be
y_0(z)=Ai(z),\quad y_1(z)=\omega Ai(\omega z),\quad y_2(z)=\omega^2Ai(\omega^2z),
\ee
using these functions to define
\be\label{ran-Phidef}
\phi(z)=\left\{
\ba{ll}
\left(\ba{cccc}y_0(z)&-y_2(z)&0&0\\y'_0(z)&-y'_2(z)&0&0\\0&0&1&0\\0&0&0&1\ea\right)&\mbox{µ±~$0<arg z<\frac{2\pi}{3}$},\\
\left(\ba{cccc}-y_1(z)&-y_2(z)&0&0\\-y'_1(z)&-y'_2(z)&0&0\\0&0&1&0\\0&0&0&1\ea\right)&\mbox{µ±~$\frac{2\pi}{3}<arg z<\pi$},\\
\left(\ba{cccc}-y_2(z)&y_1(z)&0&0\\-y'_2(z)&y'_1(z)&0&0\\0&0&1&0\\0&0&0&1\ea\right)&\mbox{µ±~$-\pi<arg z<-\frac{2\pi}{3}$},\\
\left(\ba{cccc}y_0(z)&y_1(z)&0&0\\y'_0(z)&y'_1(z)&0&0\\0&0&1&0\\0&0&0&1\ea\right)&\mbox{µ±~$-\frac{2\pi}{3}<arg z<0$}.
\ea
\right.
\ee

\par
To match the asymptotic condition as $z\ra\infty$, we should have
\be
\tilde P=\left(\id+O(\frac{1}{n})\right)M(z),\quad \mbox{uniformly in $z\in \partial D(z_3,r)$}.
\ee

Then
\be\label{ran-tildeP}
\tilde P(z)=E_n(z)\phi(n^{\frac{2}{3}}\beta(z))diag\left(e^{\frac{1}{2}n(\lam_1(z)-\lam_2(z))},e^{-\frac{1}{2}n(\lam_1(z)-\lam_2(z))},1,1\right),
\ee
where
\be
E_n=\sqrt{\pi}M\left(\ba{cccc}1&-1&0&0\\-i&-i&0&0\\0&0&1&0\\0&0&0&1\ea\right)\left(\ba{cccc}n^{\frac{1}{6}}\beta^{\frac{1}{4}}&0&0&0\\0&n^{-\frac{1}{6}}\beta^{-\frac{1}{4}}&0&0\\0&0&1&0\\0&0&0&1\ea\right).
\ee

\par
A similar construction works for a parametrix $P$ around the other edge points.

\subsection{Third Transformation}
Let
\be\label{ran-Rbianhuan}
\ba{ll}
R(z)=S(z)M(z)^{-1}&\mbox{$z$ outside of the disks $D(\pm z_j,r),j=1,2,3$},\\
R(z)=S(z)P(z)^{-1}&\mbox{$z$ inside of the disks}.
\ea
\ee
Then we have $R(z)$ is analytic on $\C\backslash \Gam_R$, where $\Gam_R$ consists of the six circles $\partial D(\pm z_j,r),j=1,2,3$,
the parts of $\Gam$ outside of the six disks, and the real intervals $(-\infty,-z_3-r),(-z_2+r,-z_1-r),(z_1+r,z_2-r),(z_3+r,\infty)$.

\par
The jump relations for $R(z)$ are
\be\label{ran-RRHP}
R_+=R_-J_R,
\ee
where
\be
\ba{ll}
J_R=MP^{-1},&\mbox{on the circles, oriented counterclockwise},\\
J_R=MJ_SM^{-1},&\mbox{on the remaining parts of $\Gam_R$}.
\ea
\ee

From (\ref{ran-Pasy}) it follows that $J_R=\id+O(\frac{1}{n})$ uniformly on the circles, and from (\ref{ran-Sjump}), (\ref{ran-Sjumpreal}), (\ref{ran-Tjump})
and lemma \ref{lamdaxiao} it follows that $J_R=\id+O(e^{-cn})$ for $c>0$ as $n\ra\infty$, uniformly on the remaining parts of $\Gam_R$. So we can conclude
\be\label{ran-Rasyn}
J_R(z)=\id+O(\frac{1}{n}),\quad \mbox{as $n\ra\infty$, uniformly on $\Gam_R$}.
\ee

As $z\ra\infty$, we have
\be\label{ran-Rasy}
R(z)=\id+O(\frac{1}{z}).
\ee

\par
From (\ref{ran-RRHP}), (\ref{ran-Rasyn}), (\ref{ran-Rasy}) and and the fact that we can deform the contours in any desired
direction, it follows that
\be
R(z)=\id+O(\frac{1}{n|z|+1}),\qquad n\ra\infty,\quad \mbox{uniformly for $z \in \R\backslash \Gam_R$},
\ee
see \cite{deiftbook}.

\par
By Cauchy theorem, we then also have
\be
R'(z)=O(\frac{1}{n|z|+1}),
\ee
thus,
\be\label{ran-Rjiexi}
R^{-1}(y)R(x)=\id+R^{-1}(y)(R(x)-R(y))=\id+O(\frac{x-y}{n}).
\ee

\section{Proofs of the Results}
\begin{theorem}
Let $V(M)=\frac{1}{2}M^2$, $n_1=n_3$, $n_2=t$, $a^2>3$. Then
the limiting mean density of eigenvalues
\be
\rho(z)=\lim_{n\ra\infty}\frac{1}{n}K_n(x,x),
\ee
exists, and it is supported by three intervals, $[-z_3,-z_2], [-z_1,z_1]$ and $[z_2,z_3]$.
\end{theorem}
\begin{proof}
Consider $x\in(z_2,z_3)$. We may assume that the circles
around the edge points are such that $x$ is outside of the six disks. From (\ref{ran-Rbianhuan}) it follows that $S(x)=R(x)M(x)$. And from (\ref{ran-Rjiexi})and $M_+$ is real analytic in a neighborhood of $x$, then
\be
S^{-1}_+(y)S_+(x)=\id+O(x-y),\quad \mbox{µ± }\quad y\ra x,
\ee
uniformly for all $n$. Thus, we have
\be
\ba{rl}
K_n(x,y)&=\frac{e^{n(h(y)-h(x))}}{2\pi i(x-y)}\left(\ba{cccc}-e^{ni\im\lam_{1+}(y)}&e^{-ni\im\lam_{1+}(y)}&0&0\ea\right)(\id+O(x-y))\left(\ba{c}e^{-ni\im\lam_{1+}(x)}\\e^{ni\im\lam_{1+}(x)}\\0\\0\ea\right)\\
{}&=e^{n(h(y)-h(x))}\left(\frac{-e^{ni(\im\lam_{1+}(y)-\im\lam_{1+}(x))}+e^{-ni(\im\lam_{1+}(y)-\im\lam_{1+}(x))}}{2\pi i(x-y)}+O(1)\right)\\
{}&=e^{n(h(y)-h(x))}\left(\frac{sin{(n\im(\lam_{1+}(x)-\lam_{1+}(y)))}}{\pi (x-y)}+O(1)\right),
\ea
\ee
Letting $y\ra x$ and noting that
\be
\frac{d}{dy}\im\lam_{1+}(y)=\im\xi_{1+}(y)=\pi \rho(y),
\ee
By L'Hospital rule, we have
\be
K_n(x,x)=n\rho(x)+O(1).
\ee
Similarly, for $x$ in $(-z_3,-z_2), (-z_1,z_1)$

For $x\in (-\infty,-z_3)\cup (-z_2,-z_1)\cup (z_1,z_2)\cup (z_3,\infty)$, we have $K_n(x,x)$ decreases exponentially
fast. This implies that
\be
\lim_{n\ra\infty}\frac{1}{n}K_n(x,x)=0.
\ee

If x is one of the edge points, it is shown in the proof of \ref{ran-thm3} that as $n\ra \infty$,
\be
\frac{1}{n}K_n(x,x)=O\left(\frac{1}{n^{\frac{1}{3}}}\right).
\ee

\end{proof}

Let
\be
\hat K_n(x,y)=e^{n(h(x)-h(y))}K_n(x,y),
\ee
where for $x\in (-z_3,-z_2)\cup (-z_1,z_1)\cup (z_2,z_3)$,
\be
h(x)=-\frac{1}{4}x^2+\re \lam_{1+}(x),\quad \lam_{1+}(x)=\int_{z_1}^x\xi_{1+}(s)ds.
\ee
\begin{theorem}\label{ran-thm2}
For every $x_0\in(-z_3,-z_2)\cup (-z_1,z_1)\cup (z_2,z_3)$ and $u,v\in \R$, we have
\be
\lim_{n\ra\infty}\frac{1}{n\rho(x_0)}\hat K_n\left(x_0+\frac{u}{n\rho(x_0)},x_0+\frac{v}{n\rho(x_0)}\right)=\frac{sin \pi(u-v)}{\pi(u-v)}.
\ee
\end{theorem}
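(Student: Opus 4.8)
The plan is to unwind the chain of transformations $Y\to T\to S\to R$ inside the scaling window of size $1/n$ around $x_0$, in the spirit of the Deift--Zhou analysis of the $3\times3$ external source problem of Bleher and Kuijlaars \cite{bkcmp1}. Since the previous section has already reduced everything to the near-identity matrix $R$ satisfying $(\ref{ran-Rjiexi})$, only soft estimates remain. I carry out the argument for $x_0\in(z_2,z_3)$; the intervals $(-z_1,z_1)$ and $(-z_3,-z_2)$ are handled identically once one uses $(\ref{ran-KnS22})$, resp.\ $(\ref{ran-KnS33})$, in place of $(\ref{ran-KnS11})$. Fix $u,v\in\R$ and put $x=x_0+\frac{u}{n\rho(x_0)}$, $y=x_0+\frac{v}{n\rho(x_0)}$. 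For $n$ large both points lie in $(z_2,z_3)$ and outside the six disks $D(\pm z_j,r)$, so there $S(z)=R(z)M(z)$. Using $(\ref{ran-KnS11})$ together with $\hat K_n(x,y)=e^{n(h(x)-h(y))}K_n(x,y)$, the potentially large scalar factor $e^{n(h(y)-h(x))}$ cancels and
\be
\hat K_n(x,y)=\frac{1}{2\pi i(x-y)}\left(\ba{cccc}-e^{ni\im\lam_{1+}(y)}&e^{-ni\im\lam_{1+}(y)}&0&0\ea\right)S_+^{-1}(y)S_+(x)\left(\ba{c}e^{-ni\im\lam_{1+}(x)}\\ e^{ni\im\lam_{1+}(x)}\\ 0\\ 0\ea\right).
\ee

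Next I would insert $S_+=RM_+$ and write $S_+^{-1}(y)S_+(x)=M_+^{-1}(y)M_+(x)+M_+^{-1}(y)\big(R^{-1}(y)R(x)-\id\big)M_+(x)$. On a fixed compact neighbourhood of $x_0$ inside $(z_2,z_3)$ the model solution $M$ is analytic and invertible there (indeed $\det M\equiv1$), so $M_+$ and $M_+^{-1}$ are bounded and Lipschitz; combined with $R^{-1}(y)R(x)=\id+O\big((x-y)/n\big)$ from $(\ref{ran-Rjiexi})$ this gives $S_+^{-1}(y)S_+(x)=\id+O(x-y)$ uniformly in $n$. Plugging this into the display, noting that all exponential entries of the two vectors have modulus $1$, and using $-e^{-i\theta}+e^{i\theta}=2i\sin\theta$, the identity part of $S_+^{-1}(y)S_+(x)$ contributes $\frac{2i\sin\theta}{2\pi i(x-y)}=\frac{\sin\theta}{\pi(x-y)}$ with $\theta=n\,\im\big(\lam_{1+}(x)-\lam_{1+}(y)\big)$, while the $O(x-y)$ part contributes a remainder of size $O\big(1/(x-y)\big)\cdot O(x-y)=O(1)$.

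To finish I would Taylor-expand the phase: since $\lam_{1+}'=\xi_{1+}$ and $\im\xi_{1+}=\pi\rho$ on $(z_2,z_3)$, with $\rho$ smooth and positive near $x_0$,
\be
\theta=n\int_{y}^{x}\pi\rho(s)\,ds=\pi\rho(x_0)\,n(x-y)+O(1/n)=\pi(u-v)+O(1/n),
\ee
while $n(x-y)=\frac{u-v}{\rho(x_0)}$, so $\frac{1}{2\pi i(x-y)}=\frac{n\rho(x_0)}{2\pi i(u-v)}$. Dividing $\hat K_n(x,y)$ by $n\rho(x_0)$ therefore turns the $O(1)$ remainder into $O(1/n)$ and yields
\be
\frac{1}{n\rho(x_0)}\hat K_n(x,y)=\frac{\sin\big(\pi(u-v)+O(1/n)\big)}{\pi(u-v)}+O(1/n)\longrightarrow\frac{\sin\pi(u-v)}{\pi(u-v)},\qquad n\ra\infty.
\ee
For $u=v$ one instead uses $\hat K_n(x,x)=K_n(x,x)=n\rho(x)+O(1)$, from the computation of $K_n(x,x)$ carried out above, together with continuity of $\rho$, to get the limit $1$. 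The same steps with $(\ref{ran-KnS22})$ and $(\ref{ran-KnS33})$ cover the other two intervals.

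The only genuinely delicate point I foresee is the error bookkeeping just indicated: the $O(x-y)$ correction in $S_+^{-1}(y)S_+(x)$ is amplified to $O(1)$ by the $1/(x-y)\sim n$ prefactor and is removed only by the final normalization by $n\rho(x_0)$. The conceptual reason a finite nonzero limit survives at all is that the rapidly oscillating phases $e^{\pm i n\,\im\lam_{1+}(x_0)}$ cancel precisely because the leading part of $S_+^{-1}(y)S_+(x)$ is the identity: the bilinear form of the row and column vectors then collapses to $2i\sin$ of the bounded phase difference $n\,\im\big(\lam_{1+}(x)-\lam_{1+}(y)\big)$, which is exactly what converts into the sine kernel.
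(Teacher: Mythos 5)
Your proposal is correct and follows essentially the same route as the paper: reduce to the formula $(\ref{ran-KnS11})$ for $K_n$ on $(z_2,z_3)$, use $S_+=RM_+$ with $(\ref{ran-Rjiexi})$ and the analyticity of $M_+$ to get $S_+^{-1}(y)S_+(x)=\id+O(x-y)$ uniformly in $n$, extract the $2i\sin$ from the oscillatory phases, and expand the phase $n\,\im(\lam_{1+}(x)-\lam_{1+}(y))$ to $\pi(u-v)+O(1/n)$ (the paper does this via the mean value theorem rather than the integral Taylor expansion, which is the same computation). Your explicit bookkeeping of how the $O(x-y)$ correction becomes $O(1)$ and is then killed by the $1/(n\rho(x_0))$ normalization, and your separate treatment of $u=v$, only make explicit what the paper leaves implicit.
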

\begin{proof}
We just prove $x\in(z_2,z_3)$, similarly for the rest. Letting
\be
x=x_0+\frac{u}{n\rho(x_0)},\quad y=x_0+\frac{v}{n\rho(x_0)}.
\ee
By the definition of $\hat K_n$, we have
\be
\frac{1}{n\rho(x_0)}\hat K_n(x,y)=\frac{sin(n\im(\lam_{1+}(x)-\lam_{1+}(y)))}{\pi (u-v)}+O(\frac{1}{n}).
\ee
By the mean value theorem, we have
\be
\im(\lam_{1+}(x)-\lam_{1+}(y))=(x-y)\pi \rho(t),
\ee
for some $t$ between $x$ and $y$. We also can know that $t=x_0+O(1/n)$ and
\be
n\im(\lam_{1+}(x)-\lam_{1+}(y))=\pi(u-v)\frac{\rho(t)}{\rho(x_0)}=\pi(u-v)\left(1+O(\frac{1}{n})\right).
\ee
Then we have
\be
\frac{1}{n\rho(x_0)}\hat K_n(x,y)=\frac{sin\pi(u-v)}{\pi (u-v)}+O(\frac{1}{n}).
\ee
This is proven \ref{ran-thm2}¡£
\end{proof}

\begin{theorem}\label{ran-thm3}
We use the same notation as in Theorem \ref{ran-thm2}.
For every $u,v\in \R$, we have
\be
\lim_{n\ra\infty}\frac{1}{(\rho_1n)^{\frac{2}{3}}}\hat K_n\left(z_3+\frac{u}{(\rho_1n)^{\frac{2}{3}}},z_3+\frac{v}{(\rho_1n)^{\frac{2}{3}}}\right)=\frac{Ai(u)Ai'(v)-Ai'(u)Ai(v)}{u-v},
\ee
\be
\lim_{n\ra\infty}\frac{1}{(\rho_2n)^{\frac{2}{3}}}\hat K_n\left(z_2+\frac{u}{(\rho_2n)^{\frac{2}{3}}},z_2+\frac{v}{(\rho_2n)^{\frac{2}{3}}}\right)=\frac{Ai(u)Ai'(v)-Ai'(u)Ai(v)}{u-v},
\ee
\be
\lim_{n\ra\infty}\frac{1}{(\rho_3n)^{\frac{2}{3}}}\hat K_n\left(z_1+\frac{u}{(\rho_3n)^{\frac{2}{3}}},z_1+\frac{v}{(\rho_3n)^{\frac{2}{3}}}\right)=\frac{Ai(u)Ai'(v)-Ai'(u)Ai(v)}{u-v},
\ee
Similar limits hold near the edge points $-z_3,-z_2,-z_1$.
\end{theorem}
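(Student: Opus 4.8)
\emph{Proof strategy (for the theorem just stated).} The plan is to run the standard Airy–parametrix argument, carrying out the edge point $z_3$ in full; the remaining five edge points follow from the same local edge–parametrix construction together with the symmetry $z\mapsto -z$. Fix $u,v\in\R$ and put $x=z_3+u/(\rho_1 n)^{2/3}$, $y=z_3+v/(\rho_1 n)^{2/3}$, which for large $n$ lie inside $D(z_3,r)$. I will do the computation for $u,v\le 0$, so that $x,y\in(z_2,z_3)$ and (\ref{ran-KnS11}) is available; the case where $u$ or $v$ is positive is handled identically, with (\ref{ran-KnT}) replacing (\ref{ran-KnS11}) on that side of $z_3$ and a different one of the four sectorial formulas in (\ref{ran-Phidef}) being used for $\phi$ — the point being that every sector reproduces the \emph{same} Airy kernel.

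First I would insert the local parametrix. Inside $D(z_3,r)$ one has $S=RP$ by (\ref{ran-Rbianhuan}), with $R$ analytic in the disk (its potential jump across $(z_3-r,z_3)$ cancels because $P$ carries the identical jump), and on $(z_3-r,z_3)$ one has $P=\tilde P$; hence by (\ref{ran-tildeP}), $S_+(z)=R(z)E_n(z)\,\phi_+(n^{2/3}\beta(z))\,D_n(z)$ with $D_n(z)=\mathrm{diag}(e^{\frac{n}{2}(\lam_1-\lam_2)},e^{-\frac{n}{2}(\lam_1-\lam_2)},1,1)$. Substituting $S_+$ and $S_+^{-1}$ into (\ref{ran-KnS11}), the exponential prefactor $e^{n(h(y)-h(x))}$ there is cancelled exactly by the $e^{n(h(x)-h(y))}$ in the definition of $\hat K_n$, and $\hat K_n(x,y)$ becomes $\frac{1}{2\pi i(x-y)}$ times a row vector proportional to $e^{\pm ni\,\im\lam_{1+}(y)}$, then $D_n(y)^{-1}$, $\phi_+^{-1}(\zeta_y)$, $E_n^{-1}(y)R^{-1}(y)R(x)E_n(x)$, $\phi_+(\zeta_x)$, $D_n(x)$, and a column vector proportional to $e^{\mp ni\,\im\lam_{1+}(x)}$, where $\zeta_z:=n^{2/3}\beta(z)$. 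The key observation is that on $(z_2,z_3)$ the jump relations (\ref{lamjump}) force $\lam_{1+}=\overline{\lam_{2+}}$, so that $\frac{n}{2}(\lam_1-\lam_2)_+ = ni\,\im\lam_{1+}$; consequently $D_n(x)$ applied to the oscillating column vector yields the constant $(1,1,0,0)^{T}$, and the oscillating row vector times $D_n(y)^{-1}$ yields $(-1,1,0,0)$, so that
\[ \hat K_n(x,y)=\frac{1}{2\pi i(x-y)}\,(-1,1,0,0)\,\phi_+^{-1}(\zeta_y)\big[E_n^{-1}(y)R^{-1}(y)R(x)E_n(x)\big]\phi_+(\zeta_x)\,(1,1,0,0)^{T}. \]

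Next I would control the middle factor and pass to the limit. By (\ref{ran-Rjiexi}), $R^{-1}(y)R(x)=\id+O((x-y)/n)=\id+O(n^{-5/3})$; and $E_n$ is analytic in $D(z_3,r)$ with $E_n^{\pm1}=O(n^{1/6})$ there, so a Cauchy estimate gives $E_n'=O(n^{1/6})$ on a compact neighbourhood of $z_3$ and hence $E_n^{-1}(y)E_n(x)=\id+O(n^{-1/3})$ because $|x-y|=O(n^{-2/3})$; combining, the bracketed factor equals $\id+O(n^{-1/3})$. Since $\beta(z_3)=0$ and $\beta'(z_3)=\rho_1^{2/3}$, one has $\zeta_x=u+O(n^{-2/3})$, $\zeta_y=v+O(n^{-2/3})$ and $x-y=(u-v)/(\rho_1 n)^{2/3}$. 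Using the relevant sectorial form of $\phi$ from (\ref{ran-Phidef}) together with the linear relation $y_0+y_1+y_2=0$, one computes $\phi_+(\zeta_x)(1,1,0,0)^{T}=(Ai(\zeta_x),Ai'(\zeta_x),0,0)^{T}$ and $(-1,1,0,0)\phi_+^{-1}(\zeta_y)=w^{-1}(-Ai'(\zeta_y),Ai(\zeta_y),0,0)$, where $w=\det\phi$ is a nonzero constant (an Airy Wronskian; the normalisation built into $\phi$ and $E_n$ makes $2\pi i\,w=-1$). Multiplying out, extracting the Jacobian $1/(x-y)=(\rho_1 n)^{2/3}/(u-v)$, and letting $n\to\infty$ yields $\frac{1}{(\rho_1 n)^{2/3}}\hat K_n(x,y)\to\frac{Ai(u)Ai'(v)-Ai'(u)Ai(v)}{u-v}$, which is the first assertion; the limits at $z_2$, $z_1$, and (by $z\mapsto -z$) at $-z_3,-z_2,-z_1$ are obtained verbatim, with $\rho_2,\rho_3$ in place of $\rho_1$.

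I expect the only genuinely delicate point to be the uniform bound $E_n^{\pm1}=O(n^{1/6})$ on $D(z_3,r)$: one must check that the $(z-z_3)^{\mp1/4}$ blow-up of the model solution $M$ at the edge is matched exactly by the factors $\beta^{\pm1/4}$ (together with the constant matrix in $E_n$), so that $E_n$ is in fact analytic across $(z_3-r,z_3)$ and of the asserted size, which is what legitimises the Cauchy estimate on $E_n'$. A secondary, milder point is the explicit sector-by-sector verification that each of the four forms of $\phi$ in (\ref{ran-Phidef}) contracts to the same Airy kernel — this is where the compatibility of the six local constructions is actually used, and it is what makes the reduction to $u,v\le 0$ harmless.
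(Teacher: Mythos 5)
Your proposal is correct and follows essentially the same route as the paper's proof: insert the Airy parametrix via $S=R\tilde P$ inside $D(z_3,r)$, reduce $\hat K_n$ to the sandwich $(-1,1,0,0)\,\phi_+^{-1}(\zeta_y)\bigl[E_n^{-1}(y)R^{-1}(y)R(x)E_n(x)\bigr]\phi_+(\zeta_x)\,(1,1,0,0)^{T}$, show the bracket tends to $\id$ using $R^{-1}(y)R(x)=\id+O(n^{-5/3})$ and $E_n^{-1}(y)E_n(x)=\id+O(n^{-1/3})$, and evaluate the Airy vectors via $y_0+y_1+y_2=0$. The only additions are expository (the explicit cancellation of the $h$- and $D_n$-factors, and the remark on the analyticity of $E_n$ across the cut), which the paper leaves implicit.
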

\begin{proof}
We just prove the first formula, similarly for the rest. Noting that $\beta'(z_3)=\rho_1^{\frac{2}{3}}$.

Choosing that $u,v\in\R$ and letting
\be
x=z_3+\frac{u}{(\rho_1n)^{\frac{2}{3}}},\quad y=z_3+\frac{v}{(\rho_1n)^{\frac{2}{3}}}
\ee
Suppose that $u,v<0$, so we can use the formula (\ref{ran-KnS11}) for $K_n(x,y)$. Thus, we have as $n$ goes to infinity, $x$ is inside of $D(z_3,r)$, from (\ref{ran-lamda12}), (\ref{ran-tildeP}), (\ref{ran-Rbianhuan}), we have
\be
\ba{rl}
S_+(x)=&R(x)P_+(x)=R(x)\tilde P(x)\\
{}=&R(x)E_n(x)\Phi_+(n^{\frac{2}{3}}\beta(x))diag\left(e^{\frac{1}{2}n(\lam_1(z)-\lam_2(z))},e^{-\frac{1}{2}n(\lam_1(z)-\lam_2(z))},1,1\right)\\
{}=&R(x)E_n(x)\Phi_+(n^{\frac{2}{3}}\beta(x))diag\left(e^{ni\im\lam_{1+}(x)},e^{-ni\im\lam_{1+}(x)},1,1\right),
\ea
\ee
$S_+(y)$ has similar equation hold. Thus,
\be
\ba{rl}
\frac{1}{(\rho_1n)^{2/3}}\hat K_n(x,y)=&\frac{1}{2\pi i(u-v)}\left(\ba{cccc}-1&1&0&0\ea\right)\Phi^{-1}_+(n^{2/3}\beta(y))E^{-1}_n(y)R^{-1}(y)\\
{}&{}\times R(x)E_n(x)\Phi_+(n^{2/3}\beta(x))\left(\ba{c}1\\1\\0\\0\ea\right).
\ea
\ee

Since $\rho^{2/3}=\beta'(z_3)$, as $n\ra\infty$,
\be
n^{2/3}\beta(x)=n^{2/3}\beta\left(z_3+\frac{u}{(\rho_1n)^{2/3}}\right)\ra u,
\ee
Then $\Phi_+(n^{2/3}\beta(x))\ra \Phi_+(u)$. We use the second formula of (\ref{ran-Phidef}) to compute $\Phi_+(u)$, we have
\be
\lim_{n\ra\infty}\Phi_+(n^{2/3}\beta(x))\left(\ba{c}1\\1\\0\\0\ea\right)=\left(-y_1(u)-y_2(u)\\-y'_1(u)-y'_2(u)\\0\\0\right)=\left(\ba{c}y_0(u)\\y'_0(u)\\0\\0\ea\right).
\ee
Similary, we have
\be
\lim_{n\ra\infty}\left(\ba{cccc}-1&1&0&0\ea\right)\Phi^{-1}_+(n^{2/3}\beta(y))=-2\pi i\left(\ba{cccc}-y'_0(v)&y_0(v)&0&0\ea\right).
\ee

Noting that $R^{-1}(y)R(x)=\id+O(\frac{x-y}{n})$, then
\be
R^{-1}(y)R(x)=\id+O(\frac{1}{n^{5/3}}).
\ee
From the explicit formula of $E_n$,
\be
E_n(x)=O(n^{1/6}),\quad E^{-1}_n(y)=O(n^{1/6}),\quad E^{-1}_n(y)E_n(x)=\id+O(\frac{1}{n^{1/3}}).
\ee
we have
\be
\lim_{n\ra\infty}E^{-1}_n(y)R^{-1}(y)R(x)E_n(x)=\id.
\ee

Thus, we have
\be
\ba{rl}
\lim_{n\ra\infty}\frac{1}{cn^{2/3}}\hat K_n(x,y)=&\frac{1}{2\pi i(u-v)}\times (-2\pi i)\left(\ba{cccc}-y'_0(v)&y_0(v)&0&0\ea\right)\left(\ba{c}y_0(u)\\y'_0(u)\\0\\0\ea\right)\\
{}=&\frac{y_0(u)y'_0(v)-y'_0(u)y_0(v)}{u-v}.
\ea
\ee
Since $y_0=Ai$, then we prove the first equation as $u,v<0$. The rest is similarly, we just use the another relation between $P$ and $\tilde P$.
\end{proof}


{\bf Acknowledgements} This work of Xu was supported by Shanghai Sailing Program
supported by Science and Technology Commission of Shanghai Municipality
under Grant NO. 15YF1408100 and the Hujiang Foundation of China (B14005). Most
of this work was done during the author(Xu) visited Indiana
University-Purdue University, Indianapolis. And the author also
would like to thank Prof. Bleher for his advise and help and
arranging this visit. Fan was support by grants from the National
Science Foundation of China (Project No. 11271079).
Yang thank the Macau Science and Technology Development Fund for generous support: MYRG2014-00011-FST  from Macau

\end{document}